\providecommand{\U}[1]{\protect\rule{.1in}{.1in}}
\providecommand{\U}[1]{\protect\rule{.1in}{.1in}}
\newtheorem{theorem}{Theorem}
\newtheorem{lemma}{Lemma}
\newtheorem{proposition}{Proposition}
\theoremstyle{plain}
\newtheorem{acknowledgement}{Acknowledgement}
\newtheorem{example}{Example}
\newtheorem{remark}{Remark}
\numberwithin{equation}{section}
\DeclareMathOperator{\sech}{sech}
\DeclareMathOperator{\csch}{csch}
\begin{document}
\title[Nonlinear Schr$\ddot{\mbox{O}}$dinger Equation]{Blow-up results and
soliton solutions for a generalized variable coefficient nonlinear Schr\"{o}%
dinger equation}
\author{J. Escorcia}
\address{Department of Mathematics, University of Puerto Rico, Arecibo, P.O.
Box 4010, Puerto Rico 00614-4010.}
\email{jose.escorcia@upr.edu}
\author{E. Suazo}
\address{School of Mathematical and Statistical Sciences, University of
Texas Rio Grande Valley, 1201 W. University Drive, Edinburg, Texas,
78539-2999.}
\email{erwin.suazo@utrgv.edu}
\date{\today }
\subjclass{Primary 81Q05, 35C05. Secondary 42A38}

\begin{abstract}
In this paper, by means of similarity transformations we study exact
analytical solutions for a generalized nonlinear Schr$\ddot{\mbox{o}}$dinger
equation with variable coefficients. This equation appears in literature
describing the evolution of coherent light in a nonlinear Kerr medium,
Bose-Einstein condensates phenomena and high intensity pulse propagation in
optical fibers. By restricting the coefficients to satisfy Ermakov-Riccati
systems with multiparameter solutions, we present conditions for existence
of explicit solutions with singularities and a family of oscillating
periodic soliton-type solutions. Also, we show the existence of bright-,
dark- and Peregrine-type soliton solutions, and by means of a computer
algebra system we exemplify the nontrivial dynamics of the solitary wave
center of these solutions produced by our multiparameter approach.

\textbf{Keywords.} Soliton-like equations, Nonlinear Schr\"{o}dinger like
equations, Fiber optics, Gross-Pitaevskii equation, Similarity
transformations and Riccati-Ermakov systems.

\end{abstract}

\maketitle

\section{ Introduction}

The study of the nonlinear Schr\"{o}dinger equation (NLS) with real
potential $V$ 
\begin{equation}
i\psi _{t}=-\frac{1}{2}\Delta \psi +V(\boldsymbol{x,}t)\psi +\lambda
\left\vert \psi \right\vert ^{2s}\psi ,\qquad \psi (0,\boldsymbol{x}%
)=\varphi (\boldsymbol{x}),\qquad \mathbf{\boldsymbol{x}}\in \mathbb{R}%
^{n},\qquad \Delta =\sum_{j=1}^{n}\partial _{x_{j}x_{j}}  \label{NLSV}
\end{equation}%
has been studied extensively not only for its role in physics, such as in
Bose-Einstein condensates and nonlinear optics, but also for its
mathematical complexity (for a review of the several results available see 
\cite{Abl}, \cite{Agr}, \cite{Caz:Ha}, \cite{Caz}, \cite{Carretero}, \cite%
{Li:Ve} and \cite{Sulem:Sulem}). For the case $\lambda =-1$, $V\equiv 0,$
and $ns<2$ (subcritical case) Weinstein \cite{Weinstein} proved that if $%
\varphi \in H^{1},$ then $\psi $ exists globally in $H^{1}.$ It is also
known (see \cite{Caz}, \cite{FibichBook} and \cite{Sulem:Sulem} for a
complete review) that NLS for critical ($ns=2$) and supercritical ($ns>2$)
cases present solutions that become singular in a finite time in $L^{p}$ for
some finite $p$. In \cite{Fibich2011} singular solutions of the subcritical
NLS were presented in $L^{p}$.

In \cite{Carles2011} it was proved that if $\varphi \in \Sigma =\{f\in H^{1}(%
\mathbb{R}^{n}):\boldsymbol{x}\rightarrow \left\vert \boldsymbol{x}%
\right\vert f(\boldsymbol{x})\in L^{2}(\mathbb{R}^{n})\}$, $V(\boldsymbol{x,}%
t)$ is real, locally bounded in time and subquadratic in space, and $\lambda
\in \mathbb{R}$, then the solution of the Cauchy initial value problem
exists globally in $\Sigma ,$ provided that $s<2/n$ or $s\geq 2/n$ and $%
\lambda \geq 0.$ Also, in \cite{Carles2011} it was shown that if $V(%
\boldsymbol{x,}t)=b(t)x_{j}^{2},$ $b(t)\in C(\mathbb{R};\mathbb{R})$ in (\ref%
{NLSV}), then there exist blow-up solutions if $\lambda <0$ and $s=2/n.$ The
proof uses the generalized Melher's formula introduced in \cite%
{Cor-Sot:Lop:Sua:Sus}. In \cite{Perez:Torres:Konotop} and \cite{Sus} a
generalized pseudoconformal transformation (lens transform in optics \cite%
{Talanov}) was presented. In this paper, as a first main result we will use
a generalized lens transformation to construct solutions with finite-time
blow-up in $L^{p}$\ norm for $1\leq p\leq \infty $ of the general variable
coefficient nonlinear Schr\"{o}dinger: 
\begin{equation}
i\psi _{t}=-a\left( t\right) \psi _{xx}+(b\left( t\right) x^{2}-f\left(
t\right) x+G(t))\psi -ic\left( t\right) x\psi _{x}-id\left( t\right) \psi
+ig\left( t\right) \psi _{x}+h\left( t\right) \left\vert \psi \right\vert
^{2s}\psi .  \label{NLSVC}
\end{equation}

In modern nonlinear sciences some of the most important models are the
variable coefficient nonlinear Schr\"{o}dinger-type ones. Applications
include long distance optical communications, optical fibers and plasma
physics,\ see \cite{Agr}, \cite{Al}, \cite{Bru}, \cite{Chen}, \cite{Dai:Wang}%
, \cite{He}, \cite{He:Li}, \cite{He:Chara:Ke:Fran}, \cite{Krug:Pea:Har}, 
\cite{Ma:Shabat}, \cite{Pono:Agrawal}, \cite{Pono:Agrawal2}, \cite%
{Ra:Agrawal}, \cite{Serkin:Hasegawa}, \cite{Serkin:Matsumoto}, \cite{Tian}, 
\cite{Wa:Tian}, \cite{Yu:Yan} and references therein.

If we make $a(t)=\Lambda /4\pi n_{0},$ $\Lambda $ being the wavelength of
the optical source generating the beam, and choose $c(t)=g(t)=0,$ then (\ref%
{NLSVC}) models a beam propagation inside of a planar graded-index nonlinear
waveguide amplifier with quadratic refractive index represented by $b\left(
t\right) x^{2}-f\left( t\right) x+G(t),$ and $h\left( t\right) $ represents
a Kerr-type nonlinearity of the waveguide amplifier, while $d\left( t\right) 
$ represents the gain coefficient. If $b\left( t\right) >0$ \cite%
{Pono:Agrawal} $($resp. $b\left( t\right) <0,$ see \cite{Ra:Agrawal}$)$ in
the low-intensity limit, the graded-index waveguide acts as a linear
defocusing (focusing) lens.

Depending on the selections of the coefficients in equation (\ref{NLSVC}),
the applications vary in very specific problems (see \cite{Tian} and
references therein):

\begin{itemize}
\item Bose-Einstein condensates \cite{Carretero}: $b(\cdot )\neq 0$, $a$, $h$
constants and other coefficients are zero$.$

\item Dispersion-managed optical fibers and soliton lasers \cite%
{Krug:Pea:Har}, \cite{Serkin:Hasegawa} and \cite{Serkin:Matsumoto}: $a(\cdot
),$ $h(\cdot ),$ $d(\cdot )\neq 0$ \ are respectively dispersion,
nonlinearity and amplification, and\ the other coefficients are zero. $%
a(\cdot )$ and $h(\cdot )$ can be periodic as well, see \cite%
{Ablowitz:Hirooka} and \cite{Medvedev:Shty:Mush}.

\item Pulse dynamics in the dispersion-managed fibers \cite{Ma:Shabat}: $%
h(\cdot )\neq 0,$ $a$ is a constant and other coefficients are zero$.$
\end{itemize}

In this paper to obtain the main results we use a fundamental approach
consisting of the use of similarity transformations and the solutions of
Riccati Ermakov systems with several parameters inspired by the work in \cite%
{Marhic78}. Similarity trasformations have been a very popular strategy in
nonlinear optics since the lens transform presented by Talanov \cite{Talanov}%
; extensions of this approach have been presented in \cite%
{Perez:Torres:Konotop} and \cite{Sus}. Applications include nonlinear
optics, Bose-Einstein condensates, integrability of NLS and quantum
mechanics, see for example \cite{Al}, \cite{Aldaya:Guerrero}, \cite%
{Carles2011}, \cite{Lo:Su:VeSy} and references therein. E. Marhic in 1978
introduced (probably for the first time) a one-parameter $\left\{ \alpha
(0)\right\} $ family of solutions for the linear Schr\"{o}dinger equation of
the one-dimensional harmonic oscillator; the use of an explicit formulation
(classical Melher's formula \cite{Fey:Hib} and \cite{Ni:Su:Uv}) for the
propagator was fundamental. The solutions presented by E. Marhic constituted
a generalization of the original Schr\"{o}dinger wave packet with
oscillating width. Also, in \cite{Cor-Sot:Lop:Sua:Sus} a generalized
Melher's formula for a general linear Schr\"{o}dinger equation of the
one-dimensional generalized harmonic oscillator of the form (\ref{NLSVC})
with $h(t)=0$ was presented$.$ For the latter case in \cite{Lan:Lop:Sus}, 
\cite{Lop:Sus:VegaGroup} and \cite{Suaz:Suspre}, multiparameter solutions in
the spirit of Marhic in \cite{Marhic78} have been presented. The parameters
for the Riccati system arose originally in the process of proving
convergence to the initial data for the Cauchy initial value problem (\ref%
{NLSVC}) with $h(t)=0$ and in the process of finding a general solution of a
Riccati system \cite{Sua1} and \cite{Suaz:Suspre}. Ermakov systems with
solutions containing parameters \cite{Lan:Lop:Sus} have been used
successfully to construct solutions for the generalized harmonic oscillator
with a hidden symmetry \cite{Lop:Sus:VegaGroup}, and they have also been
used to present Galilei transformation, pseudoconformal transformation and
others in a unified manner, see \cite{Lop:Sus:VegaGroup}. More recently they
have been used in \cite{Mah:Su:Sus} to show spiral and breathing solutions
and solutions with bending for the paraxial wave equation. In this paper, as
a second main result we introduce a family of Schr\"{o}dinger equations
presenting periodic soliton solutions by using multiparameter solutions for
Riccati-Ermakov systems. Further, as a third main result we show that these
parameters provide a control on the dynamics of solutions for equations of
the form (\ref{NLSVC}).\ These results should deserve numerical and
experimental studies.

This paper is organized as follows: In Section 2, as an application of a
generalized lens transformation and multiparameter solutions for Riccati
systems we present conditions to obtain solutions with singularity in finite
time in $L^{p}$\ norm, $1\leq p\leq \infty $ for (\ref{NLSVC}). Also, we
show that through this more general parameter approach we can obtain the
same $L^{\infty }$ solutions with finite-time blow-up for standard NLS
presented in \cite{Cor-Sot:Lop:Sua:Sus} and finite-time blow-up for NLS with
quadratic potential. In Section 3, we present a family of soliton solutions
for (\ref{NLSVC}) presenting bright- and dark-type solitons; this family
includes the standard NLS models. This family has multiparameter solutions
coming from solutions of a related Ermakov system, extending the results
presented in \cite{Sua:Sus}, where a Riccati system was used. By the use of
these parameters the dynamics of periodic solutions for (\ref{NLSVC}) show
bending properties, see Figures 1 and 2. In Section 4, again, as an
application of generalized lens transformations and an alternative approach
to solve the Riccati system (\ref{rica1})-(\ref{rica6}) we present how the
parameters provide us with a control on the center axis of the solution of
bright and dark soliton solutions for special coefficients in (\ref{NLSVC}).
Figures 3 and 4 show the bending propagation of the solutions after
introducing parameters, extending the results presented in \cite{Mah:Su:Sus}
and \cite{Sua:Sus} to (\ref{NLSVC}). Also we show that it is possible to
construct a transformation that reduces (\ref{NLSVC}), with $a(t)=$ $%
l_{0}=\pm 1$ and $G(t)=0,$ to standard NLS with convenient initial data
(Lemma 4) in order to assure existence and uniqueness of classical solutions
(Proposition 1). As an application we show how the dynamics of the Peregrine
soliton solutions of the nonlinear Schr\"{o}dinger equation consider change
when the dissipation, $d(t)$, and the nonlinear term, $h(t)$ change, see
Figures 5-8. We have also prepared a Mathematica file as supplemental
material where all the solutions for this Section are verified. Finally, in
Section 5 we have an appendix recalling the main tools we have used for our
results. These tools are a solution with multiparameters of the Riccati
system (\ref{rica1})-(\ref{rica6}) and a modification of the transformation
introduced in \cite{Sus}; we have introduced an extra parameter $l_{0}=\pm 1$
in order to use standard solutions for Peregrine-type soliton solutions.
Also a 2D version of a generalized lens transformation is recalled. All the
formulas from the appendix have been verified previously using computer
algebra systems \cite{Ko:Su;su}.

\section{Finite-time blow-up for nonautonomous nonlinear Schr\"{o}dinger
equations}

In this section as an application of the multiparameter solution for Riccati
systems we present conditions needed in order to obtain solutions with
singularities in finite time with $L^{\infty }$ norm for (\ref{NLSVC}). We
show that we can obtain the same $L^{\infty }$ solutions with finite-time
blow-up for standard NLS presented in \cite{Cor-Sot:Lop:Sua:Sus} and
finite-time blow-up for the Gross-Pitaevskii equation. Also, as an
application of a generalized lens transformation (see section 6.2) in this
section we present conditions to obtain solutions with singularity in finite
time with $L^{p}$ norm for (\ref{NLSVC}) for dimensions one and two. We
present our first main result:

\begin{theorem}[Solutions with singularity in finite time with $L^{p}$ norm, 
$1\leq p\leq \infty $]
If the characteristic equation associated to (\ref{NLSVC}), i.e 
\begin{equation}
\mu ^{\prime \prime }-\left( \frac{a^{\prime }}{a}-2c+4d\right) \mu ^{\prime
}+4\left( ab-cd+d^{2}+\frac{d}{2}\left( \frac{a^{\prime }}{a}-\frac{%
d^{\prime }}{d}\right) \right) \mu =0,
\end{equation}%
admits two standard solutions $\mu _{0}$ and $\mu _{1}$ subject to%
\begin{equation}
\mu _{0}\left( 0\right) =0,\quad \mu _{0}^{\prime }\left( 0\right) =2a\left(
0\right) \neq 0\qquad \mu _{1}\left( 0\right) \neq 0,\quad \mu _{1}^{\prime
}\left( 0\right) =0,  \label{initicond}
\end{equation}%
and if we choose $h(t)=a(t)\beta ^{2}(t)\mu ^{2s}(t)$, $\beta $ and $\mu $
satisfies a solvable Riccati-type system (\ref{nlse3})-(\ref{nlse8}) (with $%
\mu \left( 0\right) ,$ $\beta (0)\neq 0$), then there exists an interval $I$
of time such that if $-\alpha \left( 0\right) \in \gamma _{0}(I),$ then (\ref%
{NLSVC}) presents a solution with finite-time blow-up in $L^{p}$ norm at $%
T^{\ast }=\gamma _{0}^{-1}(-\alpha \left( 0\right) )\in I$. Further,
solutions present the following explicit form:

(i). If $p=\infty ,$ a solution for (\ref{NLSVC}) is given explicitly by 
\begin{equation}
\psi _{y}\left( x,t\right) =\frac{1}{\sqrt{\mu \left( t\right) }}\
e^{iS_{y}\left( x,t\right) },  \label{Anzats1}
\end{equation}%
where $y$ is a parameter and $S_{y}\left( t,x\right) =\alpha \left( t\right)
x^{2}+\beta \left( t\right) xy+\gamma \left( t\right) y^{2}+\delta \left(
t\right) x+\varepsilon \left( t\right) y+\kappa \left( t\right) $ and $%
\alpha (t),$ $\beta (t),$ $\gamma (t),$ $\delta (t),$ $\epsilon (t)$ and $%
\kappa (t)$ satisfy the Riccati system (\ref{nlse3})-(\ref{nlse8}).

(ii). If $1\leq p<\infty $, a solution for (\ref{NLSVC}) is given explicitly
by 
\begin{equation}
\psi (x,t)=\frac{1}{\sqrt{\mu (t)}}e^{i(\alpha (t)x^{2}+\delta (t)x+\kappa
(t))}\chi (\xi ,\tau ),\qquad \xi =\beta (t)x+\varepsilon (t),\qquad \tau
=\gamma (t),  \label{Anzats2}
\end{equation}%
where $\alpha (t),$ $\beta (t),$ $\gamma (t),$ $\delta (t),$ $\epsilon (t)$
and $\kappa (t)$ are given by (\ref{mu})-(\ref{kappa0})$.$ Additionally, $%
\chi $ satisfies%
\begin{equation}
i\chi _{\tau }+\chi _{\xi \xi }+\left\vert \chi \right\vert ^{2s}\chi =0.
\end{equation}

(iii). (2D case) The natural 2D version of (\ref{NLSVC}), the nonlinear
equation%
\begin{align}
i\psi _{t}& =-a\left( \psi _{xx}+\psi _{yy}\right) +b\left(
x^{2}+y^{2}\right) \psi -ic\left( x\psi _{x}+y\psi _{y}\right) -2id\psi
\label{2dNLS} \\
& -\left( xf_{1}+yf_{2}\right) \psi +i\left( g_{1}\psi _{x}+g_{2}\psi
_{y}\right) -\left\vert \psi \right\vert ^{2s}\psi ,  \notag
\end{align}%
where $a,$ $b,$ $c,$ $d,$ $f_{1,2}$ and $g_{1,2}$ are real-valued functions
of $t,$ admits an explicit solution with finite-time blow-up of the form 
\begin{equation}
\psi =\mu ^{-1}e^{i(\alpha (x^{2}+y^{2})+(\delta _{1}x+\delta _{2}y)+\kappa
_{1}+\kappa _{2})}\chi (\xi ,\eta ,\tau ),  \label{Anzats3}
\end{equation}%
where $\xi =\beta (t)x+\varepsilon _{1}(t),$ $\eta =\beta (t)y+\varepsilon
_{2}(t),$ $\tau =\gamma (t),$ $h(t)=a(t)\beta ^{2}(t)\mu ^{2s}(t)$, and $%
\alpha (t),$ $\beta (t),\gamma (t),$ $\delta _{1}(t),$ $\delta _{2}(t),$ $%
\kappa _{1}(t),$ $\kappa _{2}(t),$ $\varepsilon _{1}(t),$ $\varepsilon
_{2}(t)$ satisfy the given conditions in Lemma 4. Finally, $\chi $ is a
solution of 
\begin{equation}
i\chi _{\tau }+\chi _{\xi \xi }+\chi _{\eta \eta }+\left\vert \chi
\right\vert ^{2s}\chi =0.
\end{equation}
\end{theorem}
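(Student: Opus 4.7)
The overall strategy is to substitute each ansatz into (\ref{NLSVC}) (or (\ref{2dNLS}) in part iii), use the chain rule to push the variable coefficients and the nonlinearity onto the auxiliary functions $\alpha,\beta,\gamma,\delta,\varepsilon,\kappa,\mu$, match coefficients of monomials in $x$ (and $y$) to force the Riccati--Ermakov system (\ref{nlse3})--(\ref{nlse8}), and then read off blow-up from the multiparameter formula for $\mu$ recalled in the appendix. The algebraic identity that makes everything close is the prescription $h(t)=a(t)\beta^{2}(t)\mu^{2s}(t)$, which is tailored so that the $|\psi|^{2s}\psi$ term rescales to a standard cubic-type nonlinearity after the similarity reduction.

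For part (i), I would compute $\psi_{t}$, $\psi_{x}$, $\psi_{xx}$, and $|\psi|^{2s}\psi$ from (\ref{Anzats1}), factor out $\mu^{-1/2}e^{iS_{y}}$, and read the resulting polynomial identity in $x$ and $y$. The coefficients of $x^{2}$, $x$, $xy$, $y^{2}$, $y$, and $1$ reproduce the six equations of the Riccati system, with the choice $h=a\beta^{2}\mu^{2s}$ absorbing the $|\psi|^{2s}$ factor into the equation for $\kappa'$. The appendix supplies a parametric family of solutions in which $\mu$ depends on $\alpha(0)$ through a relation of the form $\mu(t)=\mu(0)\mu_{0}(t)\bigl[\alpha(0)+\gamma_{0}(t)\bigr]+\mu_{1}(t)$ (up to normalization dictated by (\ref{initicond})), so that $\mu(T^{\ast})=0$ exactly at $T^{\ast}=\gamma_{0}^{-1}(-\alpha(0))$. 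Since $\|\psi_{y}(\cdot,t)\|_{L^{\infty}}=\mu^{-1/2}(t)$, this gives the claimed $L^{\infty}$ blow-up for every value of the parameter $y$.

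For part (ii) the same substitution is performed but the $y$-dependence is replaced by $\chi(\xi,\tau)$; applying the chain rule and demanding that the drift coefficient in front of $\chi_{\xi}$ and the damping coefficient in front of $\chi$ vanish produces precisely the Riccati equations from the appendix, while the condition $h=a\beta^{2}\mu^{2s}$ again rescales the nonlinearity, leaving $i\chi_{\tau}+\chi_{\xi\xi}+|\chi|^{2s}\chi=0$. The $L^{p}$ blow-up follows from the change-of-variables identity
\begin{equation*}
\|\psi(\cdot,t)\|_{L^{p}}=\mu^{-1/2}(t)\,\beta^{-1/p}(t)\,\|\chi(\cdot,\gamma(t))\|_{L^{p}},
\end{equation*}
together with the Wronskian-type relation $\beta(t)\propto 1/\mu(t)$ inherent in the Ermakov structure, giving a rate $\mu^{1/p-1/2}$ that diverges as $\mu(t)\to 0$ for every $1\le p\le\infty$. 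Part (iii) is the direct 2D analogue: the ansatz (\ref{Anzats3}) separates in $x$ and $y$ because the coefficients $a,b,c,d$ act symmetrically on both variables, so the same manipulation reduces (\ref{2dNLS}) to the autonomous equation $i\chi_{\tau}+\chi_{\xi\xi}+\chi_{\eta\eta}+|\chi|^{2s}\chi=0$, and an extra factor of $\beta^{-1/p}$ enters from the $\eta$-integration.

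The main obstacle is verifying that the multiparameter solution of the Ermakov--Riccati system from the appendix really does place a simple zero of $\mu$ at $T^{\ast}=\gamma_{0}^{-1}(-\alpha(0))$ while keeping $\beta,\gamma,\delta,\varepsilon,\kappa$ regular on $[0,T^{\ast})$ so that (\ref{Anzats1})--(\ref{Anzats3}) are bona fide classical solutions up to the blow-up time; this requires pinning down the interval $I$ on which $\gamma_{0}$ is monotone and $\mu_{0},\mu_{1}$ do not simultaneously vanish. A secondary point is the regime $p<2$, where the prefactor $\mu^{1/p-1/2}$ is milder and one must also check that the selected $\chi$ has $L^{p}$ norm bounded away from zero along the transformed time $\tau=\gamma(t)$ as $t\to T^{\ast}$; for the soliton and Peregrine profiles used elsewhere in the paper this is immediate, but it is the only point where the choice of $\chi$ matters.
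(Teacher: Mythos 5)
Your strategy coincides with the paper's: substitute the ansatz, match powers of $x$ (and $y$) to force the Riccati system (\ref{nlse3})--(\ref{nlse8}), solve it with the multiparameter formulas (\ref{mu})--(\ref{kappa0}), and locate the singularity at the zero of $\mu$; part (ii) goes through the lens transform of Lemma 2 with the standing wave $\chi=e^{i\tau}R(\xi)$, and part (iii) through the 2D lemma with the ground state. One small slip: the appendix formula is $\mu(t)=2\mu(0)\mu_{0}(t)\left(\alpha(0)+\gamma_{0}(t)\right)$ with no additive $\mu_{1}(t)$ term (the $\mu_{1}$ dependence is hidden inside $\gamma_{0}$ via (\ref{gamma0})), and the interval $I$ is pinned down in the paper exactly as you anticipate: $I=J\cap J'$, on which $\mu_{0}\neq 0$ and the Wronskian identity $\gamma_{0}'=W[\mu_{0},\mu_{1}]/(2\mu_{0}^{2})\neq 0$ makes $\gamma_{0}$ invertible.

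The substantive issue is your $L^{p}$ step. The scaling identity $\Vert\psi(\cdot,t)\Vert_{p}=\mu^{-1/2}|\beta|^{-1/p}\Vert\chi(\cdot,\tau)\Vert_{p}$ is correct, and since $\beta(t)=\beta(0)\mu(0)w(t)/\mu(t)$ with $w$ bounded away from zero, the rate is indeed $\mu^{1/p-1/2}$ up to constants. But as $\mu\to 0^{+}$ this diverges only when $1/p-1/2<0$, i.e.\ $p>2$: for $p=2$ the norm is constant, and for $1\leq p<2$ it tends to \emph{zero}. So your assertion that the rate ``diverges for every $1\leq p\leq\infty$'' fails on the lower half of the stated range, and your closing remark that the $p<2$ prefactor is merely ``milder'' mischaracterizes the problem --- the obstruction is the sign of the exponent, not the behavior of $\Vert\chi\Vert_{p}$ along $\tau=\gamma(t)$. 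To be fair, the paper's own proof of (ii) never performs this computation (it only checks $\Vert R\Vert_{p}<\infty$ and then asserts blow-up), so your more explicit bookkeeping actually exposes that the construction, as written, does not yield divergence of $\Vert\psi(\cdot,t)\Vert_{p}$ for $1\leq p\leq 2$. Part (i) is unaffected, since $\Vert\psi_{y}(\cdot,t)\Vert_{\infty}=|\mu(t)|^{-1/2}\to\infty$ directly.
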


\begin{proof}
To prove (i) we follow \cite{Cor-Sot:Lop:Sua:Sus} and look for a solution of
the form (\ref{Anzats1}). After substituting on (\ref{NLSVC}) we obtain the
following Riccati system: 
\begin{align}
& \frac{d\alpha }{dt}+b\left( t\right) +2c\left( t\right) \alpha +4a\left(
t\right) \alpha ^{2}=0,  \label{nlse3} \\
& \frac{d\beta }{dt}+\left( c\left( t\right) +4a\left( t\right) \alpha
\left( t\right) \right) \beta =0,  \label{nlse4} \\
& \frac{d\gamma }{dt}+a\left( t\right) \beta ^{2}\left( t\right) =0,
\label{nlse5} \\
& \frac{d\delta }{dt}+\left( c\left( t\right) +4a\left( t\right) \alpha
\left( t\right) \right) \delta =f\left( t\right) +2\alpha \left( t\right)
g\left( t\right) ,  \label{nlse6} \\
& \frac{d\varepsilon }{dt}=\left( g\left( t\right) -2a\left( t\right) \delta
\left( t\right) \right) \beta \left( t\right) ,  \label{nlse7} \\
& \frac{d\kappa }{dt}=g\left( t\right) \delta \left( t\right) -a\left(
t\right) \delta ^{2}\left( t\right) -\frac{h\left( t\right) }{\mu ^{s}\left(
t\right) }.  \label{nlse8}
\end{align}%
Using (\ref{mu})-(\ref{kappa0}) in the appendix, (\ref{nlse3})-(\ref{nlse7})
can be solved, but (\ref{nlse8}) absorbs the nonlinearity and must be solved
separately. Since there exists an interval $J$ of time with $\mu _{0}\left(
t\right) \neq 0$ for all $t\in J,$ and $\mu _{0}\left( t\right) $ and $\mu
_{1}\left( t\right) $ have been chosen to be linearly independent on an
interval, let's say $J^{\prime },$ we observe that for $t\in J\cap J^{\prime
}\equiv I,$ we get%
\begin{equation*}
\gamma _{0}^{\prime }(t)=\frac{W[\mu _{0}\left( t\right) ,\mu _{1}\left(
t\right) ]}{2\mu _{0}^{2}(t)}\neq 0,
\end{equation*}%
and therefore from the general expression for $\mu $ given by (\ref{mu}),
see \cite{Co:Suslov},the equation (\ref{Anzats1}) will have finite-time
blow-up at $T^{\ast }=\gamma _{0}^{-1}(-\alpha \left( 0\right) )\in I.$

Now we proceed to prove (ii). Using the generalized lens transform, see
Lemma 2, we can transform the nonautonomous and inhomogeneous nonlinear Schr%
\"{o}dinger equation (\ref{NLSVC}) into the standard one$:$%
\begin{equation}
i\chi _{\tau }+\chi _{\xi \xi }+\left\vert \chi \right\vert ^{2s}\chi =0,
\label{autoNLS}
\end{equation}%
recalling (\cite{FibichBook}, \cite{Li:Ve} and \cite{Sulem:Sulem}) that the
autonomous focusing NLS (\ref{autoNLS}) in dimension $n$ allows solutions of
the form $\chi $ $=e^{i\tau }R(r),$ where $r=\left\vert \xi \right\vert $
and $R$ is the solution of 
\begin{equation}
R^{\prime \prime }(r)+\frac{n-1}{r}R^{\prime }(r)-R(r)+R^{2s+1}(r)=0,\qquad
R^{\prime }(0)=0,\qquad R(\infty )=0.
\end{equation}%
In particular, for \ $n=1,$ a solution of (\ref{autoNLS}) is given by 
\begin{equation}
R(\xi )=(s+1)^{1/2s}\cosh ^{-1/s}\left( s\xi \right) ,
\end{equation}%
and for all $t\in R$ and $p\in \lbrack 1,\infty ]$ we have $\left\vert
\left\vert R\right\vert \right\vert _{p}<\infty .$ Therefore, we obtain a
solution with $L^{p}$ finite-time blow-up for (\ref{NLSVC}) in time of the
form%
\begin{equation}
\psi (x,t)=\frac{1}{\sqrt{\mu (t)}}e^{i(\alpha (t)x^{2}+\delta (t)x+\kappa
(t))}\chi (\xi ,\tau )=\frac{1}{\sqrt{\mu (t)}}e^{i(\alpha (t)x^{2}+\delta
(t)x+\kappa (t))}e^{i\tau }R(\xi ).
\end{equation}%
Again, since $\mu (t)=2\mu \left( 0\right) \mu _{0}\left( t\right) \left(
\alpha \left( 0\right) +\gamma _{0}\left( t\right) \right) $ we can predict
a blow-up at $T^{\ast }=\gamma _{0}^{-1}(-\alpha (0))$.

To prove (iii), we consider the unique positive radial solution to%
\begin{equation}
\Delta Q(\rho )-Q(\rho )+\left\vert Q(\rho )\right\vert ^{1+4/n}Q(\rho )=0,
\label{Elliptic2D}
\end{equation}

usually referred to as the ground state$.$ $Q$ vanishes at infinity (see 
\cite{Kwong} and \cite{Tao}). Similar to the one dimensional case, Lemma 3
provides a blow-up solution for (\ref{NLSVC}) with $2s=$ $1+4/n$ given by 
\begin{equation*}
\psi (\rho ,t)=\mu ^{-1}(t)e^{i(\alpha (t)(x^{2}+y^{2})+(\delta
_{1}(t)x+\delta _{2}(t)y)+\kappa _{1}(t)+\kappa _{2}(t))}Q(\rho ).
\end{equation*}%
This provides an example of an explicit blow-up solution $\left\vert
\left\vert \psi (t)\right\vert \right\vert _{p}$ $\rightarrow \infty $ as $%
t\rightarrow T^{\ast }=\gamma _{0}^{-1}(-\alpha (0))$ for the nonautonomous
nonlinear Schr\"{o}dinger equation (\ref{NLSVC}) in the two-dimensional case.
\end{proof}

The Theorem 1 above allows us to predict in an independent way the $%
L^{\infty }$ solution with finite-time blow-up for standard NLS found in 
\cite{Cor-Sot:Lop:Sua:Sus} in 2008.

\begin{example}
If we consider the equation 
\begin{equation}
i\frac{\partial \psi }{\partial t}=-\frac{1}{2}\frac{\partial ^{2}\psi }{%
\partial x^{2}}+h\left\vert \psi \right\vert ^{2s}\psi ,\qquad h=\text{%
constant},\quad s\geq 0,  \label{nlse11}
\end{equation}%
we can construct a solution with finite-time blow-up for the corresponding
Riccati system (\ref{nlse3})-(\ref{nlse8}). When we look for a solution of
the form (\ref{Anzats1}) it is easy to see that $\alpha _{0}(t)=\gamma
_{0}(t)=1/2t$ and $\beta _{0}(t)=-1/t,$ and by (\ref{mu})-(\ref{kappa}) the
solution of the corresponding characteristic equation is given by $\mu
\left( t\right) =2\mu (0)\alpha (0)t+\mu (0).$ Further we obtain explicitly 
\begin{eqnarray}
&&\alpha \left( t\right) =\frac{\alpha (0)}{1+2\alpha (0)t},\quad \beta
\left( t\right) =\frac{\beta (0)}{1+2\alpha (0)t},\quad \delta \left(
t\right) =\frac{\delta (0)}{1+2\alpha (0)t},  \label{nlse12} \\
&&\gamma \left( t\right) =\gamma (0)-\frac{\beta ^{2}(0)t}{2\left( 1+2\alpha
(0)t\right) },\qquad \qquad \varepsilon \left( t\right) =\varepsilon (0)-%
\frac{\beta (0)\delta (0)t}{1+2\alpha (0)t}.  \label{nlse13}
\end{eqnarray}

The equation (\ref{nlse8}) must be solved separately, and $\kappa (t)$ is
given by 
\begin{equation*}
\kappa \left( t\right) =\kappa (0)-\frac{\delta ^{2}(0)t}{2\left( 1+2\alpha
(0)t\right) }-\frac{h}{\alpha (0)}\xi _{s}\left( t\right)
\end{equation*}

with 
\begin{equation}
\xi _{s}\left( t\right) =\left\{ 
\begin{array}{ll}
\dfrac{1}{\left( 1-s\right) }\left( \left( \frac{1}{2}+t\alpha (0)\right)
^{1-s}-\left( \frac{1}{2}\right) ^{1-s}\right) , & \text{when }s\neq
1,\bigskip \\ 
\ln \left( 1+2t\alpha (0)\right) , & \text{when }s=1.%
\end{array}%
\right.  \label{nlse15}
\end{equation}%
Now, choosing $\alpha (0)=0,$ $\left\vert \psi \left( t,x\right) \right\vert
=1/\sqrt{2}$ is bounded at all times. However, when $\alpha (0)\neq 0,$ one
obtains 
\begin{equation}
\left\vert \psi \left( x,t\right) \right\vert =\frac{1}{\sqrt{\frac{1}{2}%
+t\alpha (0)}},\qquad t\geq 0,  \label{nlse15b}
\end{equation}%
which is bounded if $\alpha (0)>0,$ and blows up at a finite time $T^{\ast
}=-1/2\alpha (0)$ if $\alpha (0)<0.$ As expected, this result agrees with
the prediction of the theorem above, since $\gamma _{0}^{-1}(-\alpha (0))=$ $%
-1/2\alpha (0).$
\end{example}

The following example shows blow-up for the Gross-Pitaevskii equation:

\begin{example}
Let's consider the Gross-Pitaevskii equation%
\begin{equation}
i\psi _{t}=-(\psi _{xx}+\psi _{yy})+\sum_{j=1}^{2}\frac{\Omega (t)}{2}%
x_{j}^{2}\psi +\lambda |\psi |^{2}\psi .  \label{GP1}
\end{equation}

The characteristic equation associated to Gross-Pitaevskii equation is given
by%
\begin{equation}
\mu ^{\prime \prime }+\Omega (t)\mu =0.  \label{charaomega}
\end{equation}

Assuming $\Omega (t)$ is such that (\ref{charaomega}) allows two independent
solutions $\mu _{0}(t)$ and $\mu _{1}(t)$\ satisfying (\ref{initicond}),
then 
\begin{equation}
\alpha _{0}(t)=\frac{\mu _{0}^{\prime }(t)}{2\mu _{0}(t)},\qquad \beta
_{0}(t)=-\frac{1}{\mu _{0}(t)},\qquad \gamma _{0}(t)=\frac{\mu _{1}(t)}{2\mu
_{1}(0)\mu _{0}(t)}
\end{equation}

and \ 
\begin{equation}
\delta _{0}(t)=\varepsilon _{0}\left( t\right) =\kappa _{0}(t)=0.
\end{equation}

By Theorem 1, and using (\ref{mu})-(\ref{kappa0}) from the appendix, a
solution for (\ref{autoNLS}) is given by 
\begin{equation*}
\psi (\rho ,t)=\mu ^{-1}(t)e^{i(\alpha (t)(x^{2}+y^{2})+(\delta
_{1}(t)x+\delta _{2}(t)y)+\kappa _{1}(t)+\kappa _{2}(t))}Q(\rho ).
\end{equation*}

with $\mu \left( t\right) =2\mu \left( 0\right) \mu _{0}\left( t\right)
\left( \alpha \left( 0\right) +\gamma _{0}\left( t\right) \right) $

Considering $\alpha (0)=0,$ $\beta (0)=1,$ $\mu _{1}(0)=1$ and $\gamma
(0)=0, $ then 
\begin{equation}
\tau =\gamma (t)=\frac{\mu _{0}^{\prime }(t)}{2\mu _{1}(t)},\qquad \mu
(t)=\mu (0)\mu _{1}(t),\qquad \beta (t)=\frac{-\beta (0)}{\mu _{1}(t)}%
,\qquad \alpha \left( t\right) =\frac{\mu _{1}^{\prime }(t)}{2\mu _{1}(t)},
\end{equation}

and considering $\kappa _{i}(0)=0$ and for $i=1$ and $2$ we obtain 
\begin{equation*}
\delta _{i}(t)=\frac{\delta _{i}^{{}}(0)}{\mu _{1}(t)},\qquad \varepsilon
_{i}\left( t\right) =-\frac{\beta (0)\delta _{i}(0)\mu _{0}(t)}{\mu _{1}(t)}%
,\qquad \kappa _{i}(t)=-\frac{\left( \delta _{i}(0)\right) ^{2}\mu _{0}(t)}{%
2\mu _{1}(t)}.
\end{equation*}
\end{example}

\begin{remark}
Alternatively, we can use in this example the soliton solution $%
u(x,t)=e^{it}Q(x)$ to (\ref{autoNLS}) so that after applying the
pseudoconformal transform we can obtain solutions which blow up in finite
time (see a nice discussion in \cite{Tao}). Therefore, the following is a\
solution with blow up for (\ref{autoNLS}) given by%
\begin{equation}
\chi (x,t)=\frac{1}{t^{d/2}}Q\left( \frac{x}{t}\right) e^{i\left( \frac{%
\left\vert x\right\vert ^{2}}{2t}\right) -\frac{i}{t}}
\end{equation}%
where $Q$ is a ground state solution of (\ref{Elliptic2D}). Further, the
following is a solution for (\ref{GP1}) given by%
\begin{equation}
\psi (x,t)=\frac{e^{i(\alpha (t)x^{2}+\delta (t)x+\kappa (t))}}{\mu (t)\tau }%
Q\left( \frac{\boldsymbol{\xi }}{\tau }\right) e^{i\frac{\left\vert 
\boldsymbol{\xi }\right\vert ^{2}}{2\tau }-\frac{i}{\tau }},
\label{explicsol2d}
\end{equation}%
where $\boldsymbol{\xi }=(\beta (t)\boldsymbol{x+}\boldsymbol{\varepsilon }%
\left( t\right) )/\tau $, $\boldsymbol{x}=(x,y)$ and $\boldsymbol{%
\varepsilon }\left( t\right) =(\varepsilon _{1}\left( t\right) ,\varepsilon
_{2}\left( t\right) ),$ making $\delta _{i}^{{}}(0)=0,$ and after
simplification the module of (\ref{explicsol2d}) is 
\begin{equation}
\left\vert \psi (x,t)\right\vert =\left\vert \frac{1}{\mu (0)\mu _{0}(t)}%
Q\left( \frac{-\beta (0)\boldsymbol{x}}{\mu _{0}(t)}\right) \right\vert
\end{equation}

and then%
\begin{equation}
\lim_{t\rightarrow 0}\left\vert \left\vert \psi (t)\right\vert \right\vert
_{p}\rightarrow \infty .
\end{equation}
\end{remark}

It is possible to predict finite-time blow-up for toy examples:

\begin{example}
If we consider%
\begin{equation}
i\psi _{t}=-\psi _{xx}+\frac{x^{2}}{4}[\sin ^{2}t-\cos t]\psi +ix\sin t\psi
_{x}-i\sin t\psi -3e^{(3-3\cos t)}|\psi |^{2}\psi ,
\end{equation}%
the corresponding characteristic equation is given by $\mu ^{^{\prime \prime
}}-6\sin t\mu ^{^{\prime }}+(9\sin ^{2}t-3\cos t)\mu =0.$ The two
fundamental solutions are given by $\mu _{0}(t)=te^{3(1-\cos t)}$ and $\mu
_{1}(t)=e^{3(1-\cos t)},$ and also by $\gamma _{0}(t)=1/2t$ and $\mu (t)=\mu
(0)e^{3(1-\cos t)}\left[ 2\alpha (0)t+1\right] $. The explicit solution of
the form (\ref{Anzats1}) will satisfy 
\begin{equation}
\left\vert \psi \left( x,t\right) \right\vert =\frac{1}{e^{3(1-\cos t)}\mu
(0)(2\alpha (0)t+1)},
\end{equation}%
showing finite-time blow-up at $T^{^{{}}\ast }=-1/2\alpha (0)$. Again, this
result agrees with the prediction of the theorem above since $\gamma
_{0}^{-1}(-\alpha (0))=$ $-1/2\alpha (0).$
\end{example}

\section{Soliton Solutions for a Generalized Variable-Coefficient NLS Using
Ermakov's System}

In this section we present a family of Schr\"{o}dinger-type equations
admiting soliton solutions for (\ref{NLSVC}). Using a multiparameter
solution for the Ermakov system, see section 2, we present bright and
dark-type solitons for (\ref{NLSVC}), extending the results presented in 
\cite{Sua:Sus} where a Riccati system was used. We will use Lemma 3.
Further, by the use of these multiparameters, the solutions can be periodic
with bending propagation \ as in \cite{Mah:Su:Sus}\textbf{\ }for the
paraxial wave equation. We proceed to prove our second main result.

\begin{theorem}[Construction of solitons using Ermakov's system]
The nonlinear Schr\"{o}dinger equation with variable coefficients of the
form 
\begin{eqnarray}
i\psi _{t} &=&-a\left( t\right) \psi _{xx}+B\left( t\right) x^{2}\psi
-ic\left( t\right) x\psi _{x}-id\left( t\right) \psi  \label{ErmaNLS} \\
&&-M\left( t\right) x\psi +ig\left( t\right) \psi _{x}+L(t)\psi
+h(t)\left\vert \psi \right\vert ^{2}\psi  \notag
\end{eqnarray}%
has a soliton-type solution of the form%
\begin{equation}
\psi _{y}(t,x)=\frac{F(\beta (t)x+2\gamma (t)y+\varepsilon (t))}{\sqrt{\mu
(t)}}e^{i(\alpha \left( t\right) x^{2}+\beta \left( t\right) xy+\gamma
\left( t\right) y^{2}+\delta \left( t\right) x+\varepsilon \left( t\right)
y+\kappa \left( t\right) +\xi (t))},  \label{solitontype}
\end{equation}%
($y$ is a parameter) where $F$ satisfies%
\begin{equation}
F^{\prime \prime }=-\xi _{0}F+h_{0}F^{3},  \label{Elliptic}
\end{equation}%
and the following balance between coefficients (using (\ref{SErma11})-(\ref%
{SErma18})) has been imposed:%
\begin{eqnarray}
B\left( t\right) &=&b(t)-c_{0}a(t)\beta ^{4}(t),  \label{B(t)} \\
M(t) &=&f(t)+2c_{0}a(t)\beta ^{3}(t)\varepsilon \left( t\right) ,
\label{M(t)} \\
L(t) &=&c_{0}a(t)\beta ^{2}(t)\varepsilon ^{2}\left( t\right) ,  \label{L(t)}
\\
h(t) &=&h_{0}a(t)\beta ^{2}(t)\mu (t),  \label{h(t)} \\
\xi \left( t\right) &=&\xi _{0}(\gamma (t)-\gamma (0)).  \label{zeta(t)}
\end{eqnarray}
\end{theorem}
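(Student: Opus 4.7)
The plan is to substitute the ansatz (\ref{solitontype}) directly into (\ref{ErmaNLS}), separate real and imaginary parts, and collect coefficients by powers of $x$ (treating $y$ as a parameter) and by the functional type of term ($F''$, $F'$, $F$, $F^3$). Because the phase $S=\alpha x^2+\beta xy+\gamma y^2+\delta x+\varepsilon y+\kappa+\xi$ is a quadratic polynomial in $x$, every elementary term after substitution is of the form (polynomial in $x,y$)$\times F^{(k)}(\zeta)$ or $\times F^3(\zeta)$ with $\zeta=\beta x+2\gamma y+\varepsilon$, so I can match coefficients cleanly.

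First, I would compute the partial derivatives. Writing $\psi=\mu^{-1/2}F(\zeta)e^{iS}$, one has
\begin{align*}
\psi_t &=\psi\Bigl(-\tfrac{\mu'}{2\mu}+\tfrac{F'}{F}\zeta_t+iS_t\Bigr),\qquad
\psi_x =\psi\Bigl(\tfrac{F'}{F}\beta+iS_x\Bigr),\\
\psi_{xx}&=\psi\Bigl(\tfrac{F''}{F}\beta^2+2i\beta S_x\tfrac{F'}{F}+iS_{xx}-S_x^2\Bigr),
\end{align*}
with $S_x=2\alpha x+\beta y+\delta$ and $\zeta_t=\beta'x+2\gamma'y+\varepsilon'$. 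Substituting into (\ref{ErmaNLS}) and dividing through by $\psi$ yields an identity that must hold for all $x$ and all parameter values $y$.

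Collecting the imaginary part first, the $F'/F$ terms produce $\zeta_t+2a\beta S_x-g\beta+cx\beta=0$; the $x^2$-independent piece (i.e.\ the prefactor from $\mu$, $S_t$'s imaginary contribution from $2a S_{xx}$ and the $-id\psi$ term) gives the consistency $-\mu'/(2\mu)+a\cdot 2\alpha+d=0$, equivalent to $\mu'=2\mu(2a\alpha+d)$. Matching the real part in powers of $x$ (coefficient of $x^2$, of $x$, of $1$ times $F$) reproduces the Riccati/Ermakov equations for $\alpha,\beta,\gamma,\delta,\varepsilon,\kappa$ precisely as in the appendix system (\ref{SErma11})--(\ref{SErma18}), provided we allow the extra "Ermakov constants" $c_0,\xi_0,h_0$ that show up on the right-hand sides of the $\alpha$-, $\delta$- and $\kappa$-equations. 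These extra terms are exactly absorbed, respectively, into the modified potential $B(t)$, the linear forcing $M(t)$ and the phase drift $L(t)$, which is what forces the balance (\ref{B(t)})--(\ref{L(t)}). The term $\xi(t)$ in the phase is there to absorb a constant multiple of $\gamma'(t)$ coming from the coefficient of $F$ on the left-hand side after the $F''$ ODE is substituted, giving (\ref{zeta(t)}).

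Finally, the $F''$-coefficient is $a\beta^2$, while the $F^3$-coefficient from the nonlinearity is $h(t)/\mu(t)$. For the residual equation in $F$ to reduce to the autonomous ODE (\ref{Elliptic}), both must be the same (up to the Ermakov constants), which pins down $h(t)=h_0\,a(t)\beta^2(t)\mu(t)$, i.e.\ (\ref{h(t)}). After all coefficient matches are enforced, the remaining equation is exactly
\[
a(t)\beta^2(t)\bigl(F''+\xi_0 F-h_0 F^3\bigr)=0,
\]
so any $F$ solving (\ref{Elliptic}) yields a solution of (\ref{ErmaNLS}).

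The main obstacle is bookkeeping: one has to carry both the $y$-dependence (since $y$ is a free parameter and certain $y$-coefficients must vanish identically) and the Ermakov constants $c_0,\xi_0,h_0$ through all the coefficient matching without conflating them with the ordinary Riccati constants already present in $\alpha,\beta,\gamma$. Once the system is recognized as the Ermakov-type extension of (\ref{nlse3})--(\ref{nlse8}) covered by Lemma~3, the existence of multiparameter solutions $\alpha(t),\beta(t),\gamma(t),\delta(t),\varepsilon(t),\kappa(t),\mu(t)$ is guaranteed, and the balance relations (\ref{B(t)})--(\ref{zeta(t)}) are precisely what must be imposed on the coefficients of (\ref{ErmaNLS}) so that the residual equation for $F$ decouples into (\ref{Elliptic}).
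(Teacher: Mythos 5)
Your proposal is correct and takes essentially the same approach as the paper: substitute the quadratic-phase ansatz, split into real and imaginary parts, match coefficients in powers of $x$ and $y$ to recover the Ermakov system of Lemma 3, and use the balance conditions (\ref{B(t)})--(\ref{zeta(t)}) so that the residual $F''$, $F$, $F^{3}$ terms collapse to the autonomous ODE (\ref{Elliptic}). The only difference is organizational: the paper first derives a transport equation and the nonlinear ODE $aA_{xx}=\xi'A+hA^{3}$ for a general amplitude $A(x,t)$ and only afterwards specializes $A=F(\beta x+2\gamma y+\varepsilon)/\sqrt{\mu}$, whereas you insert the specialized amplitude from the outset.
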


\begin{proof}
We look for a solution of the form 
\begin{equation}
\psi _{y}=A_{y}(x,t)e^{iS_{y}(x,t)}  \label{anzatsSoli}
\end{equation}%
with $S_{y}(t,x)=\alpha \left( t\right) x^{2}+\beta \left( t\right)
xy+\gamma \left( t\right) y^{2}+\delta \left( t\right) x+\varepsilon \left(
t\right) y+\kappa \left( t\right) +\xi (t),$ and $y$ is a parameter (we omit
the subindex $y$ in calculations).

Replacing (\ref{anzatsSoli}) in (\ref{ErmaNLS}) and assuming $A\geq 0,$ we
obtain 
\begin{eqnarray}
iA_{t}-AS_{t} &=&-a(t)A_{xx}-2ia(t)A_{x}S_{x}+a(t)S_{x}^{2}-ia(t)AS_{xx}
\label{3.10} \\
&&+b(t)x^{2}A-ic(t)xA_{x}+c(t)xAS_{x}-id(t)A  \notag \\
&&-f(t)xA+ig(t)A_{x}-g(t)AS_{x}+h(t)A^{3}.  \notag
\end{eqnarray}%
Taking the complex part, we obtain 
\begin{equation}
A_{t}=-((4a\alpha +c)x+2a\beta y+2a\delta -g)A_{x}-(2a\alpha +d)A,
\label{87}
\end{equation}%
taking the real part and equating coefficients as in \cite%
{Cor-Sot:Lop:Sua:Sus}. We thus obtain the nonlinear ODE 
\begin{equation}
aA_{xx}=\frac{d\xi }{dt}A+h(t)A^{3}.  \label{NODE}
\end{equation}%
Now, using (\ref{B(t)})-(\ref{L(t)}), we will obtain the Ermakov-type system
from (\ref{3.10}) 
\begin{align}
& \frac{d\alpha }{dt}+b+2c\alpha +4a\alpha ^{2}=c_{0}a\beta ^{4}, \\
& \frac{d\beta }{dt}+\left( c+4a\alpha \right) \beta =0, \\
& \frac{d\gamma }{dt}+a\beta ^{2}=0, \\
& \frac{d\delta }{dt}+\left( c+4a\alpha \right) \delta =f+2cg+2c_{0}a\beta
^{3}\varepsilon , \\
& \frac{d\varepsilon }{dt}=\left( g-2a\delta \right) \beta ,  \label{Ebeta}
\\
& \frac{d\kappa }{dt}=g\delta -a\delta ^{2}+c_{0}a\beta ^{2}\varepsilon ^{2}
\\
& \alpha \left( t\right) =-\frac{1}{4a\left( t\right) }\frac{\mu ^{\prime
}\left( t\right) }{\mu \left( t\right) }-\frac{d\left( t\right) }{2a\left(
t\right) }.
\end{align}%
The solution for this system is given by Lemma 3 in the appendix. Therefore,
we have obtained an explicit expression for $S_{y}(x,t)$ in (\ref{anzatsSoli}%
).

We proceed to find an expression for $A(x,t).$ Using (\ref{Ebeta}), we can
transform (\ref{87}) into 
\begin{equation}
A_{t}+\left( -\frac{\dot{\beta}}{\beta }x+2a\beta y-\frac{\dot{\varepsilon}}{%
\beta }\right) A_{x}+\frac{\dot{\mu}}{2\mu }A=0
\end{equation}%
and look for a solution for (\ref{NODE}) of the form%
\begin{equation}
A(x,t)=\frac{1}{\sqrt{\mu }}F(z),\qquad z=C_{0}(t)x+C_{1}(t)y+C_{2}(t).
\label{Anzats}
\end{equation}%
Anzats\ (\ref{Anzats}) and Ermakov's system guide us to choose $%
C_{0}(t)=\beta (t),$ $C_{1}(t)=\gamma (t)$ and $C_{2}(t)=\varepsilon (t),$
and therefore, (\ref{NODE}) becomes%
\begin{equation}
F_{zz}=\frac{\dot{\xi}}{\beta ^{2}(t)a(t)}F+\frac{h(t)}{\mu (t)\beta
^{2}(t)a(t)}F^{3}.  \label{Painllevenlode}
\end{equation}%
From here we obtain conditions (\ref{h(t)})-(\ref{zeta(t)}), and $A$ would
be given explicitly by%
\begin{equation*}
A(x,t)=\frac{F(\beta (t)x+2\gamma (t)y+\varepsilon (t))}{\sqrt{\mu (t)}}.
\end{equation*}%
\ \ 
\end{proof}

\begin{remark}
\bigskip Equation (\ref{Elliptic}) presents the following classical {%
nonlinear wave configurations, see \cite{Sua:Sus} and references therein.}

{\ }If $h_{0}<0$ 
\begin{eqnarray}
F\left( z\right) &=&\left( \frac{-\xi _{0}+\sqrt{\xi _{0}^{2}-2C_{0}h_{0}}}{%
-h_{0}}\right) ^{1/2}  \label{cn} \\
&&\times \text{cn}\left( \left( \xi _{0}^{2}-2C_{0}h_{0}\right)
^{1/4}z,\left( \frac{-\xi _{0}+\sqrt{\xi _{0}^{2}-2C_{0}h_{0}}}{2\sqrt{\xi
_{0}^{2}-2C_{0}h_{0}}}\right) ^{1/2}\right) ,  \notag
\end{eqnarray}%
then cn$\left( u,k\right) $ is a Jacobi elliptic function. A familiar
special case is obtained with $C_{0}=0,$ the \textit{bright} soliton:%
\begin{equation}
F\left( z\right) =\sqrt{\frac{-2\xi _{0}}{-h_{0}}}\frac{1}{\cosh \left( 
\sqrt{-\xi _{0}}z\right) }  \label{brightsoliton}
\end{equation}%
when cn$\left( u,1\right) =1/\cosh u.$

If $\xi _{0}>0$%
\begin{eqnarray}
F\left( z\right) &=&\left( \frac{\xi _{0}+\sqrt{\xi _{0}^{2}-2C_{0}h_{0}}}{%
h_{0}}\right) ^{1/2}  \label{sn} \\
&&\times \text{sn}\left( \left( \frac{C_{0}h_{0}}{\xi _{0}+\sqrt{\xi
_{0}^{2}-2C_{0}h_{0}}}\right) z,\left( \frac{-\xi _{0}-\sqrt{\xi
_{0}^{2}-2C_{0}h_{0}}}{-\xi _{0}+\sqrt{\xi _{0}^{2}-2C_{0}h_{0}}}\right)
^{1/2}\right) ,  \notag
\end{eqnarray}%
then sn$\left( u,k\right) $ is a Jacobi elliptic function. Another familiar
case is obtained with $C_{0}=\xi _{0}^{2}/\left( 2h_{0}\right) $ , the 
\textit{dark} soliton:%
\begin{equation}
F\left( z\right) =\sqrt{\frac{\xi _{0}}{h_{0}}}\tanh \left( \sqrt{\frac{\xi
_{0}}{2}}z\right)  \label{darksoliton}
\end{equation}%
sn$\left( u,1\right) =\tanh u.$
\end{remark}

\subsection{Family of solutions}

The following family of equations ($h(0),$ $\mu (0),$ $\alpha (0),$ $\beta
(0)$, $\gamma (0),$ $\delta (0)$, $\varepsilon (0)$ and $\kappa (0)$ are
parameters$)$%
\begin{equation}
i\psi _{t}=-\frac{1}{2}\psi _{xx}+(1-\beta ^{4})x^{2}\psi -2\beta
^{3}(t)\varepsilon (t)x\psi +\beta ^{2}(t)\varepsilon ^{2}(t)\psi +\frac{%
h(0)\beta ^{2}(0)}{\sqrt{\beta ^{4}(0)\sin ^{2}t+\left( 2\alpha (0)\sin
t+\cos t\right) ^{2}}}\left\vert \psi \right\vert ^{2}\psi ,  \label{Familly}
\end{equation}%
with

\begin{equation}
\beta \left( t\right) =\frac{\beta (0)}{\sqrt{\beta ^{4}(0)\sin ^{2}t+\left(
2\alpha (0)\sin t+\cos t\right) ^{2}}},
\end{equation}

\begin{equation}
\varepsilon \left( t\right) =\frac{\varepsilon (0)\left( 2\alpha (0)\sin
t+\cos t\right) -\beta (0)\delta (0)\sin t}{\sqrt{\beta ^{4}(0)\sin
^{2}t+\left( 2\alpha (0)\sin t+\cos t\right) ^{2}}},
\end{equation}%
admits a family of soliton solutions given by%
\begin{equation}
\psi _{y}(x,t)=\frac{F(\beta (t)x+2\gamma (t)y+\varepsilon (t))}{\sqrt{\mu
(t)}}e^{i(\alpha \left( t\right) x^{2}+\beta \left( t\right) xy+\gamma
\left( t\right) y^{2}+\delta \left( t\right) x+\varepsilon \left( t\right)
y+\kappa \left( t\right) +\xi (t))}  \label{Soliton type}
\end{equation}%
where $F$ satisfies (\ref{Elliptic}) and 
\begin{align}
\mu (t)& =\mu (0)\sqrt{\beta ^{4}(0)\sin ^{2}t+\left( 2\alpha (0)\sin t+\cos
t\right) ^{2}} \\
& \alpha \left( t\right) =\frac{\alpha (0)\cos 2t+\sin 2t\left( \beta
^{4}(0)+4\alpha ^{2}(0)-1\right) /4}{\beta ^{4}(0)\sin ^{2}t+\left( 2\alpha
(0)\sin t+\cos t\right) ^{2}}, \\
& \gamma \left( t\right) =\gamma (0)-\frac{1}{2}\arctan \frac{\beta
^{2}(0)\sin t}{2\alpha (0)\sin t+\cos t}, \\
& \delta \left( t\right) =\frac{\delta (0)\left( 2\alpha (0)\sin t+\cos
t\right) +\varepsilon (0)\beta ^{3}(0)\sin t}{\beta ^{4}(0)\sin ^{2}t+\left(
2\alpha (0)\sin t+\cos t\right) ^{2}}, \\
& \kappa \left( t\right) =\kappa (0)+\sin ^{2}t\frac{\varepsilon (0)\beta
^{2}(0)\left( \alpha (0)\varepsilon (0)-\beta (0)\delta (0)\right) -\alpha
(0)\delta ^{2}(0)}{\beta ^{4}(0)\sin ^{2}t+\left( 2\alpha (0)\sin t+\cos
t\right) ^{2}} \\
& \qquad \quad +\frac{1}{4}\sin 2t\frac{\varepsilon ^{2}(0)\beta
^{2}(0)-\delta ^{2}(0)}{\beta ^{4}(0)\sin ^{2}t+\left( 2\alpha (0)\sin
t+\cos t\right) ^{2}},  \notag \\
\xi \left( t\right) & =\xi _{0}(\gamma (t)-\gamma (0)).
\end{align}

Now we show that this family contains classic soliton examples.

\begin{example}
(Classical bright soliton) The equation (\ref{Familly})\ with $\delta (0)=0,$
$h(0)=-2,$ $\beta (0)=1,$ $\mu (0)=1$ and $\alpha (0)=\gamma (0)=\varepsilon
(0)=\kappa (0)=0$ becomes the classical NLS 
\begin{equation}
i\psi _{t}=-\frac{1}{2}\psi _{xx}-2\left\vert \psi \right\vert ^{2}\psi ,
\end{equation}

which admits the bright soliton of the form ($y$ is velocity)%
\begin{equation}
\left\vert \psi _{y}(x,t)\right\vert ^{2}=\sech^{2}(x-ty).
\end{equation}
\end{example}

\begin{example}
(Classical dark soliton) The equation (\ref{Familly})\ with $\delta (0)=0,$ $%
h(0)=2,$ $\beta (0)=1,$ $\mu (0)=1$ and $\alpha (0)=\gamma (0)=\varepsilon
(0)=\kappa (0)=0$ becomes the classical NLS 
\begin{equation}
i\psi _{t}=-\frac{1}{2}\psi _{xx}+2\left\vert \psi \right\vert ^{2}\psi ,
\end{equation}

which admits the dark soliton of the form ($y$ is velocity)%
\begin{equation}
\left\vert \psi _{y}(x,t)\right\vert ^{2}=\tanh ^{2}(x-ty).
\end{equation}
\end{example}

The following examples show periodic solutions with bending propagation.

\begin{example}
(Bright-type soliton) The equation (\ref{Familly})\ with $\delta (0)$ a
parameter$,$ $h(0)=-2,$ $\beta (0)=2/3,$ $\mu (0)=1$ and $\alpha (0)=\gamma
(0)=\varepsilon (0)=\kappa (0)=0$ becomes%
\begin{equation}
i\psi _{t}=-\frac{1}{2}\psi _{xx}+(1-\beta ^{4})x^{2}\psi -2\beta
^{3}(t)\varepsilon (t)x\psi +\beta ^{2}(t)\varepsilon ^{2}(t)\psi -\frac{8}{9%
\sqrt{\frac{16}{81}\sin ^{2}t+\cos ^{2}t}}\left\vert \psi \right\vert
^{2}\psi ,  \label{bright family}
\end{equation}%
with%
\begin{eqnarray}
\beta \left( t\right) &=&\frac{2}{3\sqrt{\frac{16}{81}\sin ^{2}t+\cos ^{2}t}}%
,  \label{betaexamplebright} \\
\varepsilon \left( t\right) &=&\frac{-2\delta (0)\sin t}{3\sqrt{\frac{16}{81}%
\sin ^{2}t+\cos ^{2}t}}.  \label{epsilonexamplebright}
\end{eqnarray}

(\ref{bright family}) admits a bright-type soliton solution with absolute
value of the form%
\begin{equation}
\left\vert \psi _{y}(x,t)\right\vert ^{2}=\frac{\sech^{2}(\beta (t)x+2\gamma
(t)y+\varepsilon (t))}{\sqrt{\beta ^{4}(0)\sin ^{2}t+\left( 2\alpha (0)\sin
t+\cos t\right) ^{2}}}.  \label{bright}
\end{equation}%
We observe that $\delta (0)$ produces a bending effect, see Figure \ref%
{particular1}.

\begin{figure}[h!]
\centering
\subfigure[]{\includegraphics[scale=0.55]{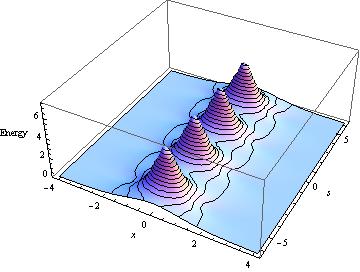}} \subfigure[]{%
\includegraphics[scale=0.55]{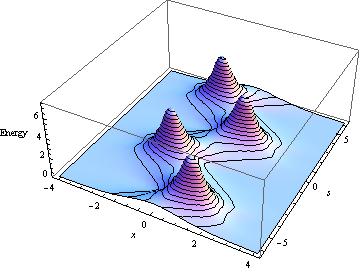}}
\caption{(a) Solution for $(\protect\ref{bright family})\text{ with }\protect%
\delta (0)=0.$ (b) Solution for $(\protect\ref{bright family})\text{ with }%
\protect\delta (0)=1.$}
\label{particular1}
\end{figure}
\end{example}

\begin{example}
(Dark-type soliton) The equation (\ref{Familly})\ with $\delta (0)$ a
parameter$,$ $h(0)=2,$ $\beta (0)=2/3,$ $\mu (0)=1$ and $\alpha (0)=\gamma
(0)=\varepsilon (0)=\kappa (0)=0$ becomes%
\begin{equation}
i\psi _{t}=-\frac{1}{2}\psi _{xx}+(1-\beta ^{4})x^{2}\psi -2\beta
^{3}(t)\varepsilon (t)x\psi +\beta ^{2}(t)\varepsilon ^{2}(t)\psi +\frac{8}{9%
\sqrt{\frac{16}{81}\sin ^{2}t+\cos ^{2}t}}\left\vert \psi \right\vert
^{2}\psi  \label{Darkfamily}
\end{equation}

and admits a dark-type soliton solution with absolute value of the form%
\begin{equation}
\left\vert \psi _{y}(x,t)\right\vert ^{2}=\frac{\tanh ^{2}(\beta
(t)x+2\gamma (t)y+\varepsilon (t))}{\sqrt{\beta ^{4}(0)\sin ^{2}t+\left(
2\alpha (0)\sin t+\cos t\right) ^{2}}}.  \label{dark}
\end{equation}

We observe again that $\delta (0)$ produces a bending effect, see Figure \ref%
{particular2}.

\begin{figure}[h!]
\centering
\subfigure[]{\includegraphics[scale=0.55]{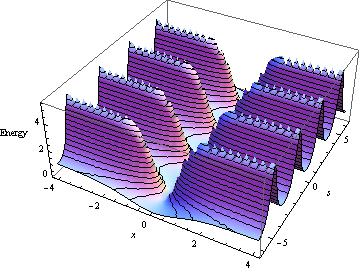}} \subfigure[]{%
\includegraphics[scale=0.55]{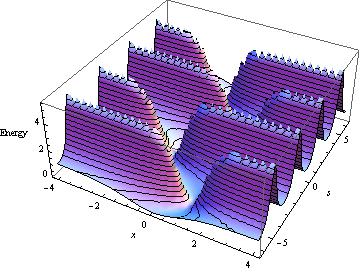}}
\caption{(a) Solution for $(\protect\ref{Darkfamily})\text{ with }\protect%
\delta (0)=0.$ (b) Solution for $(\protect\ref{Darkfamily})\text{ with }%
\protect\delta (0)=1.$ }
\label{particular2}
\end{figure}
\end{example}

\section{Dynamics of Explicit Solutions Through Parameters}

In this section using Lemmas 1 and 2, we will give examples of nonautonomous
nonlinear Schr\"{o}dinger equations with multiparameter solutions. These toy
examples show the control of the dynamics of the solutions as an application
of our multiparameter approach. We have prepared a Mathematica file as
supplemental material for this Section. In this file all the solutions for
this Section are verified. Also, all the formulas from the appendix have
been verified previously in \cite{Ko:Su;su}.In the following two examples we
use Lemma 2 from the appendix.

\subsection{Dynamics of the bright soliton: Bending propagation}

Consider the nonautonomous nonlinear Schr\"{o}dinger equation\ 
\begin{equation}
i\psi _{t}+\frac{1}{2}\psi _{xx}+ix\tanh t\psi _{x}+i\cosh t\psi +\frac{%
e^{2\sinh t}\sech t}{2\sinh t+\cosh t}|\psi |^{2}\psi =0,\quad x\in \mathbb{R%
},\hspace{0.3cm}t>0.  \label{General1}
\end{equation}%
\ Then, the characteristic equation and its solution are given by\ 
\begin{equation*}
\mu ^{\prime \prime }-(4\cosh t-2\tanh t)\mu ^{\prime }+(4\cosh ^{2}t-6\sinh
t)\mu =0.
\end{equation*}%
\begin{equation*}
\mu (t)=(2+\coth t)e^{2\sinh t}\tanh t.
\end{equation*}%
\ By Lemma 2 (\ref{General1}) can be reduced to:\ 
\begin{equation}
iu_{\tau }-u_{\xi \xi }-2|u|^{2}u=0,\quad \xi =\beta (t)x+\varepsilon
(t),\quad \tau =\gamma (t),  \label{standard1}
\end{equation}%
\ where the general solution of the Riccati system associated is:\ 
\begin{eqnarray*}
\alpha (t) &=&\dfrac{\csch t\sech t}{2+\coth t},\quad \beta (t)=\dfrac{\csch %
t}{2+\coth t},\quad \gamma (t)=\gamma (0)-\dfrac{1}{4+2\coth t}, \\
&& \\
\delta (t) &=&\dfrac{\delta (0)\csch t}{2+\coth t},\quad \varepsilon
(t)=\varepsilon (0)-\dfrac{\delta (0)}{2+\coth t},\quad \kappa (t)=\kappa
(0)-\dfrac{\delta (0)^{2}}{4+2\coth t}.
\end{eqnarray*}%
In order to use the similarity transformation method we proceed to use the
familiar solution for (\ref{standard1}): $u(\tau ,\xi )=\sqrt{v}\sech(\sqrt{v%
}\xi )\exp (-iv\tau ),\quad v>0$.

Therefore, a solution for the Schr$\ddot{\mbox{o}}$dinger equation (\ref%
{General1}) is given by:%
\begin{eqnarray}
\psi (t,x) &=&\sqrt{\frac{v\coth t}{2+\coth t}}\sech\left[ \sqrt{v}\left( 
\frac{x\csch t-\delta (0)}{2+\coth t}+{\varepsilon (0)}\right) \right]
\label{g1} \\
&\times &\exp \left[ i\left( \frac{2x^{2}\csch t\sech t+2\delta (0)x\csch %
t-\delta (0)^{2}+v}{4+2\coth t}\right) \right]  \notag \\
&\times &\exp \left[ i\left( \kappa (0)-v\gamma (0)\right) -\sinh t\right] .
\notag
\end{eqnarray}

The dynamics of the solution (\ref{g1}) are shown in Figure \ref{f1}, where
it was possible to produce a change in the central axis of the bright
soliton for certain values $\delta (0)$ and $\varepsilon (0)$. 
\begin{figure}[h]
\centering
\subfigure[Bright soliton solution bended to the left: $v=1$,
$\delta(0)=-30$ and $\varepsilon(0)=0$.]{\includegraphics[scale=0.29]{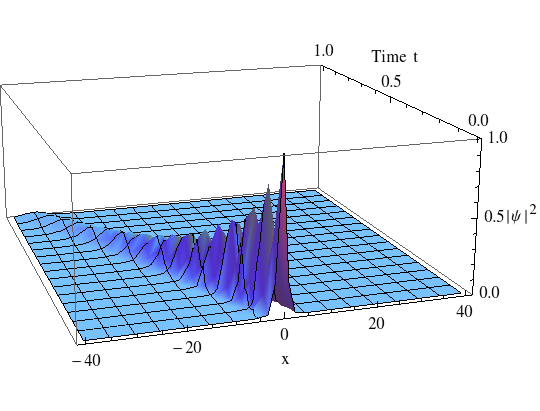}}
\subfigure[Centered bright soliton: $v=1$, $\delta
(0)=\frac{1}{6}$ and $\varepsilon(0)=-\frac{1}{6}$.] {\includegraphics[scale=0.29]{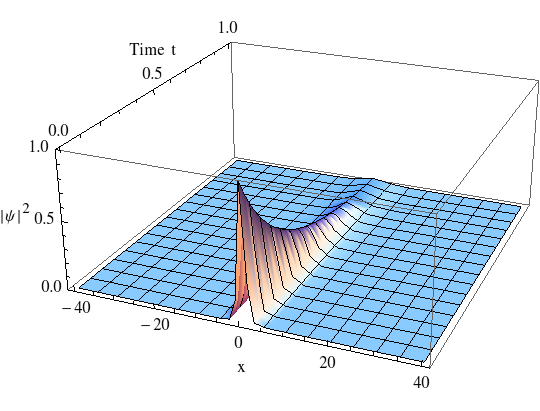}}
\subfigure[Bright soliton solution bended to the right: $v=1$, $\delta
(0)=0$ and $\varepsilon(0)=-10$.]{\includegraphics[scale=0.29]{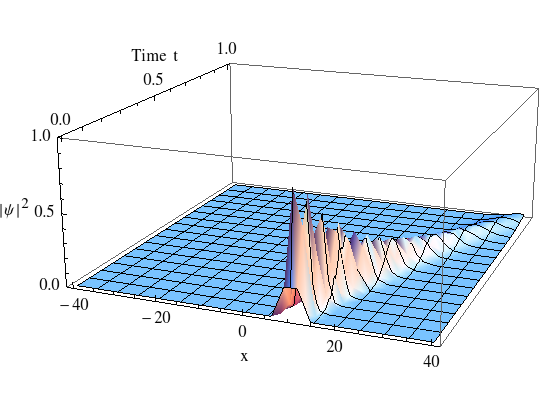}} 
\subfigure[Contour of the bright soliton solution bended to the
left.]{\includegraphics[scale=0.32]{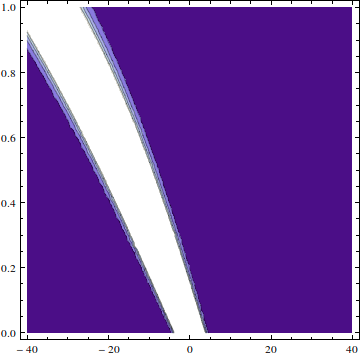}}\hspace{0.9cm} %
\subfigure[Contour for the centered bright soliton.]
{\includegraphics[scale=0.32]{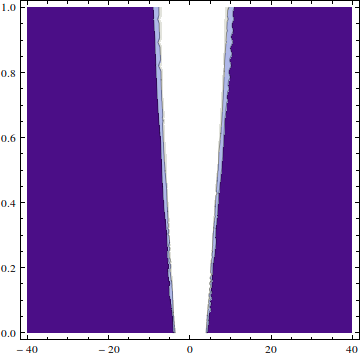}}\hspace{0.9cm} 
\subfigure[Contour
of the bright soliton solution bended to the
right.]{\includegraphics[scale=0.32]{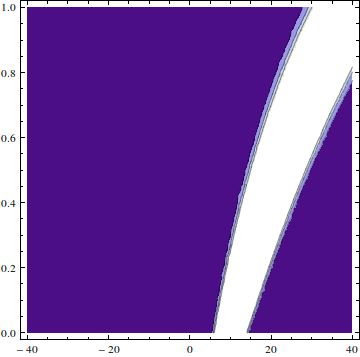}}
\caption{Control on the dynamics of the solution (\protect\ref{g1}) for the
equation (\protect\ref{General1}). }
\label{f1}
\end{figure}

\subsection{Dynamics of the dark soliton: Bending propagation}

Consider the nonautonomous nonlinear Schr\"{o}dinger equation\ 
\begin{equation}
i\psi _{t}=-\frac{1}{2}\cosh t\psi _{xx}+\frac{1}{2}\cosh t\psi
(x^{2}-i)-ix\cosh t\psi _{x}+\frac{4\cosh t}{1+\sinh t}|\psi |^{2}\psi
,\quad x\in \mathbb{R},\hspace{0.3cm}t>0;  \label{ds}
\end{equation}%
\ then, the characteristic equation and its solution are respectively\ 
\begin{equation*}
\mu ^{\prime \prime }-\tanh t\mu ^{\prime }=0,
\end{equation*}%
\begin{equation*}
\mu (t)=1+\sinh t.
\end{equation*}%
\ By Lemma 2 (\ref{ds}) can be reduced to\ 
\begin{equation}
iu_{\tau }+u_{\xi \xi }-2|u|^{2}u=0,\hspace{1cm}\xi =\beta (t)x+\varepsilon
(t),\quad \tau =\gamma (t).  \label{standard2}
\end{equation}%
\ The general solution of the corresponding Riccati system is given by\ 
\begin{eqnarray*}
\alpha (t) &=&-\frac{1}{2+2\csch t},\quad \beta (t)=\frac{1}{1+\sinh t}%
,\quad \gamma (t)=\frac{2}{1+\csch t}-\gamma (0), \\
&& \\
\delta (t) &=&\frac{\delta (0)}{1+\sinh t},\quad \varepsilon (t)=\varepsilon
(0)-\frac{2\delta (0)}{1+\csch t},\quad \kappa (t)=\kappa (0)-\frac{\delta
^{2}(0)}{2+2\csch t}.
\end{eqnarray*}%
In order to use the similarity transformation method we proceed to use the
familiar solution for (\ref{standard2}): $u(\tau ,\xi )=A\tanh \left( A\xi
\right) \exp (-2iA^{2}\tau ),$ $A\in \mathbb{R}.$ As a consequence, a
solution for the Schr$\ddot{\mbox{o}}$dinger equation (\ref{ds}) has the
following form:\ 
\begin{eqnarray}
\psi (t,x) &=&\frac{A}{\sqrt{1+\sinh t}}\tanh \left[ A\left( \frac{2x\csch %
t-2\delta (0)}{\csch t+1}+\varepsilon (0)\right) \right]  \label{g2} \\
\ &\times &\exp \left[ i\left( \frac{-x^{2}+2\delta (0)x\csch t-\delta
^{2}(0)-8A^{2}}{2+2\csch t}\right) \right]  \notag \\
\ &\times &\exp \left[ i\left( \kappa (0)+2A^{2}\gamma (0)\right) \right] . 
\notag
\end{eqnarray}%
Figure \ref{f2} describes the evolution in time of the solution (\ref{g2}).
Again for this case we have bending propagation. 
\begin{figure}[h]
\centering
\subfigure[Dark soliton solution bended to the left: $A=2$,
$\delta(0)=-1$ and $\varepsilon(0)=0$.]{\includegraphics[scale=0.28]{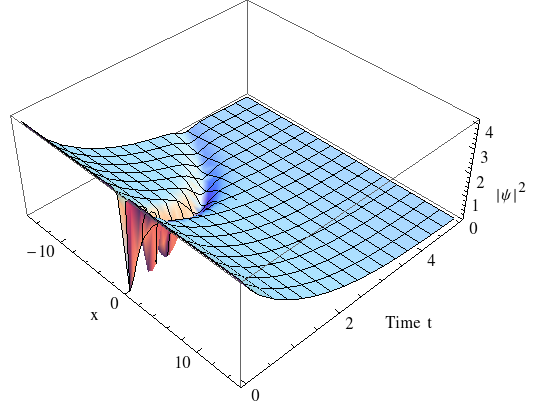}}
\subfigure[Centered dark soliton: $A=2$, $\delta
(0)=0$, $\varepsilon(0)=0$]{\includegraphics[scale=0.28]{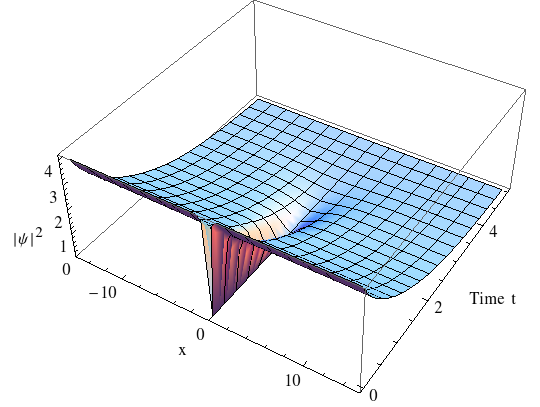}} 
\subfigure[Dark soliton solution bended to the right: $A=2$, $\delta
(0)=0$ and $\varepsilon(0)=-2$.]{\includegraphics[scale=0.28]{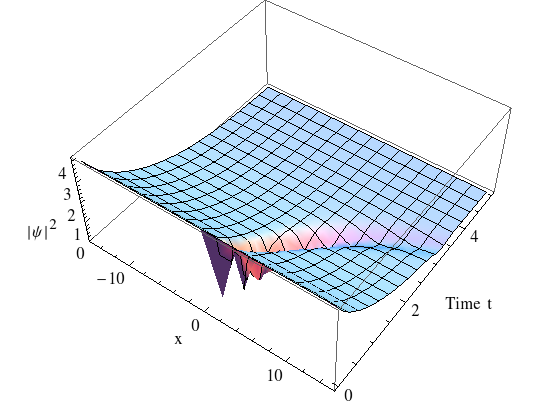}} 
\subfigure[Contour of the dark soliton solution bended to the
left.]{\includegraphics[scale=0.32]{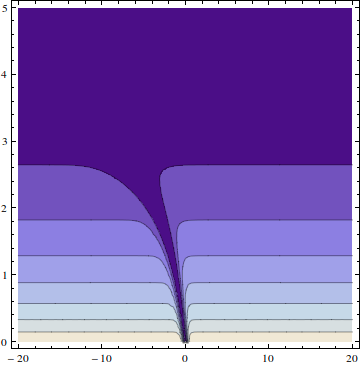}}\hspace{0.9cm} 
\subfigure[Contour of the centered  dark
soliton.]{\includegraphics[scale=0.32]{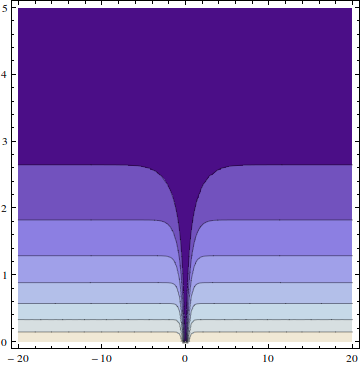}}\hspace{0.9cm} 
\subfigure[Contour of the dark soliton solution bended to the
right.]{\includegraphics[scale=0.32]{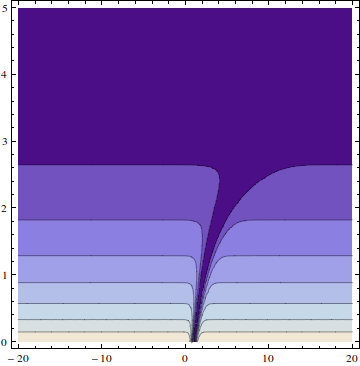}}
\caption{Control on the dynamics of the solution (\protect\ref{g2}) for the
equation (\protect\ref{ds}).}
\label{f2}
\end{figure}

The difficulty of applying Lemma 2 is solving the Riccati system. Next, we
present an alternative approach to deal with the Riccati system and see how
the dynamics of the solutions change with multiparameters.

\subsection{An alternative method to solve the coupled Riccati system and
applications to soliton solutions}

Assume the following conditions on the Riccati system (\ref{rica1})-(\ref%
{rica6}): $a(t)=-l_{0}$ with $l_{0}=\pm 1$, $\beta (t)=1$, $\tau (t)=t$ and $%
\varepsilon (t)=0$. Under this hypothesis one obtains the explicit formulas $%
\alpha (t)=l_{0}c(t)/4$, $\delta (t)=-l_{0}g(t)/2$ and $h(t)=-l_{0}\lambda
\mu (t)$, where the last expression shows the unique dependence of the
coefficient of the nonlinearity in terms of the characteristic function $\mu 
$. Furthermore one obtains the particular Riccati system\ 
\begin{equation}
\dfrac{dc}{dt}+c^{2}+4l_{0}b=0,  \label{ricafin1}
\end{equation}%
\begin{equation}
\dfrac{dg}{dt}+2l_{0}f+cg=0,  \label{ricafin2}
\end{equation}%
\begin{equation}
\dfrac{d\kappa }{dt}+\frac{l_{0}}{4}g^{2}=0,  \label{ricafin3}
\end{equation}%
\begin{equation}
\dfrac{d\mu }{dt}=(2d-c)\mu .  \label{ricafin4}
\end{equation}%
\ The solution of this system is given by\ 
\begin{equation}
\dfrac{dc}{dt}+c^{2}+4l_{0}b=0,  \label{opt1}
\end{equation}%
\begin{equation}
\alpha (t)=l_{0}\frac{c(t)}{4},\quad \delta (t)=-l_{0}\frac{g(t)}{2},\quad
h(t)=-l_{0}\lambda \mu (t),  \label{opt2}
\end{equation}%
\begin{equation}
\kappa (t)=\kappa (0)-\frac{l_{0}}{4}\int_{0}^{t}g^{2}(z)dz,  \label{opt3}
\end{equation}%
\begin{equation}
\mu (t)=\mu (0)\mbox{exp}\left( \int_{0}^{t}(2d(z)-c(z))dz\right) ,\quad \mu
(0)\neq 0,  \label{opt4}
\end{equation}%
\begin{equation}
g(t)=g(0)-2l_{0}\mbox{exp}\left( -\int_{0}^{t}c(z)dz\right) \int_{0}^{t}%
\mbox{exp}\left( \int_{0}^{z}c(y)dy\right) f(z)dz.  \label{opt5}
\end{equation}

Further, with these restrictions we have a way to construct transformations
that allow us to prove uniqueness of the solutions.

\begin{lemma}
\label{P1} Suppose that $h(t)=-l_{0}\lambda \mu (t)$ with $\lambda \in 
\mathbb{R}$, $l_{0}=\pm 1$ and that $c(t)$, $\alpha (t)$, $\delta (t)$, $%
\kappa (t)$, $\mu (t)$ and $g(t)$ satisfy the equations (\ref{opt1})-(\ref%
{opt5}). Then\ 
\begin{equation}
\psi (t,x)=\dfrac{1}{\sqrt{\mu (t)}}e^{i(\alpha (t)x^{2}+\delta (t)x+\kappa
(t))}u(t,x)  \label{tranpar1}
\end{equation}%
\ is a solution to the Cauchy problem for the nonautonomous nonlinear Schr$%
\ddot{\mbox{o}}$dinger equation\ 
\begin{equation}
i\psi _{t}=l_{0}\psi _{xx}+b(t)x^{2}\psi -ic(t)x\psi _{x}-id(t)\psi
-f(t)x\psi +ig(t)\psi _{x}+h(t)|\psi |^{2}\psi ,  \label{schpart1}
\end{equation}%
\begin{equation}
\psi (0,x)=\psi _{0}(x)  \label{schpart2}
\end{equation}%
\ if and only if $u(t,x)$ is a solution of the Cauchy problem for the
standard nonlinear Schr$\ddot{\mbox{o}}$dinger equation\ 
\begin{equation}
iu_{t}-l_{0}u_{xx}+l_{0}\lambda |u|^{2}u=0.  \label{schusu}
\end{equation}%
\begin{equation}
u(0,x)=\sqrt{\mu (0)}e^{-i(\alpha (0)x^{2}+\delta (0)x+\kappa (0))}\psi
_{0}(x).  \label{schusu1}
\end{equation}
\end{lemma}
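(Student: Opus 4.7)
The plan is to verify the equivalence by a direct substitution of the ansatz (\ref{tranpar1}) into the nonautonomous equation (\ref{schpart1}) and check, term by term, that the resulting identity reduces to the standard cubic NLS (\ref{schusu}) precisely when the coefficients $c,\alpha,\delta,\kappa,\mu,g$ obey the Riccati relations (\ref{opt1})--(\ref{opt5}) together with the nonlinearity constraint $h(t)=-l_{0}\lambda\mu(t)$. Writing $E(t,x)=\exp\bigl(i(\alpha x^{2}+\delta x+\kappa)\bigr)$ and $\psi=\mu^{-1/2}Eu$, I first compute
\[
\psi_{t}=\Bigl(-\tfrac{\mu'}{2\mu}+i(\alpha'x^{2}+\delta'x+\kappa')\Bigr)\psi+\mu^{-1/2}E\,u_{t},
\]
\[
\psi_{x}=i(2\alpha x+\delta)\psi+\mu^{-1/2}E\,u_{x},
\]
\[
\psi_{xx}=\bigl(2i\alpha-(2\alpha x+\delta)^{2}\bigr)\psi+2i(2\alpha x+\delta)\mu^{-1/2}E\,u_{x}+\mu^{-1/2}E\,u_{xx},
\]
and $|\psi|^{2}\psi=\mu^{-3/2}E|u|^{2}u$.

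Next I substitute these expressions into (\ref{schpart1}), divide through by $\mu^{-1/2}E$, and sort the resulting terms into groups according to whether they are proportional to $x^{2}u$, $xu$, $u$ (without $x$), $u_{x}$, $u_{xx}$, or $|u|^{2}u$. Setting each algebraic coefficient to zero produces a system of ODEs in $\alpha,\delta,\kappa,\mu,g$, and one algebraic identification for $\alpha$ and $\delta$ in terms of $c$ and $g$. Concretely, the coefficient of $x^{2}u$ forces $\alpha'+4l_{0}\alpha^{2}+2c\alpha+b=0$, which together with $\alpha=l_{0}c/4$ collapses to (\ref{opt1}); the coefficient of $xu$ forces $\delta'+4l_{0}\alpha\delta+c\delta-2\alpha g-f=0$, which under $\delta=-l_{0}g/2$ becomes (\ref{ricafin2}) and yields (\ref{opt5}) after integration; the $u$-coefficient (constant in $x$) gives $\kappa'+l_{0}\delta^{2}-g\delta=0$, equivalent to (\ref{opt3}); the $u_{x}$-coefficient gives exactly the identity $c x+g+4l_{0}\alpha x+2l_{0}\delta=0$, which is automatic once $\alpha=l_{0}c/4$ and $\delta=-l_{0}g/2$; and finally the $u$-coefficient coming from the time derivative $-\mu'/(2\mu)$ together with $-id$ yields $\mu'=(2d-c)\mu$, i.e.\ (\ref{opt4}). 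What survives after all these cancellations is the reduced equation
\[
iu_{t}=l_{0}u_{xx}+\frac{h(t)}{\mu(t)}|u|^{2}u,
\]
and the hypothesis $h(t)=-l_{0}\lambda\mu(t)$ turns this into (\ref{schusu}). Since the map $u\mapsto\mu^{-1/2}Eu$ is pointwise invertible (with $\mu(t)\neq 0$ by (\ref{opt4})), the equivalence is genuinely an ``if and only if.'' The initial condition (\ref{schusu1}) then follows by evaluating (\ref{tranpar1}) at $t=0$ and solving algebraically for $u(0,x)$.

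The main obstacle is purely the bookkeeping: six different ODE relations and one automatic algebraic identity must emerge from grouping terms by their dependence on $x$ and on the derivatives of $u$. The single nonobvious point is the cancellation of the $u_{x}$-coefficient, which is the algebraic reason why choosing $a(t)\equiv-l_{0}$, $\beta(t)\equiv 1$, $\varepsilon(t)\equiv 0$ in the general Riccati system (\ref{rica1})--(\ref{rica6}) forces $\alpha$ and $\delta$ to be \emph{prescribed} by $c$ and $g$ via $\alpha=l_{0}c/4$ and $\delta=-l_{0}g/2$, and in turn produces the particularly clean reduced system (\ref{opt1})--(\ref{opt5}). Once that cancellation is checked, the remaining identities match off one-to-one with (\ref{opt1})--(\ref{opt5}) and the lemma follows.
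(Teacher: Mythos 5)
Your strategy---substitute the ansatz (\ref{tranpar1}) into (\ref{schpart1}), divide by $\mu^{-1/2}e^{iS}$, and sort the result by $x^{2}u$, $xu$, $u$, $u_{x}$, $u_{xx}$ and $|u|^{2}u$---is exactly the computation that underlies the paper's lemma. The paper does not write it out: it obtains the statement by specializing the generalized lens transformation (Lemma 2 of the appendix, from \cite{Sus}) to $a(t)=-l_{0}$, $\beta\equiv 1$, $\gamma(t)=t$, $\varepsilon\equiv 0$, which is precisely what forces $\alpha=l_{0}c/4$, $\delta=-l_{0}g/2$, $h=-l_{0}\lambda\mu$ and reduces (\ref{rica1})--(\ref{rica6}) to (\ref{opt1})--(\ref{opt5}). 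Your closing observations---that the map $u\mapsto\mu^{-1/2}e^{iS}u$ is pointwise invertible, so the equivalence is genuinely two-sided, and that (\ref{schusu1}) is just (\ref{tranpar1}) read at $t=0$---are correct and are the right way to finish.

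There is, however, a systematic sign error in your displayed coefficient identities: every term originating from $l_{0}\psi_{xx}$ appears with the wrong sign, i.e.\ your equations correspond to $a=+l_{0}$ rather than $a=-l_{0}$, and as written they are inconsistent with the values $\alpha=l_{0}c/4$, $\delta=-l_{0}g/2$ that you then impose. The correct versions are: $\alpha'-4l_{0}\alpha^{2}+2c\alpha+b=0$ (your $+4l_{0}\alpha^{2}$ yields $c'+3c^{2}+4l_{0}b=0$ instead of (\ref{opt1})); $\delta'+(c-4l_{0}\alpha)\delta=f+2\alpha g$ (which under $\alpha=l_{0}c/4$, $\delta=-l_{0}g/2$ gives (\ref{ricafin2}), whereas your version gives $g'+3cg+2l_{0}f=0$); $\kappa'-g\delta-l_{0}\delta^{2}=0$, which yields $\kappa'=-l_{0}g^{2}/4$, i.e.\ (\ref{opt3}); and the $u_{x}$-coefficient is $(4l_{0}\alpha-c)x+(2l_{0}\delta+g)$, not $(4l_{0}\alpha+c)x+(2l_{0}\delta+g)$---it is the former that vanishes identically for $\alpha=l_{0}c/4$, $\delta=-l_{0}g/2$. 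With these signs repaired the bookkeeping closes exactly as you describe, the surviving equation is $iu_{t}=l_{0}u_{xx}+(h/\mu)|u|^{2}u$, and $h=-l_{0}\lambda\mu$ turns it into (\ref{schusu}). One last small point: $|\psi|^{2}\psi=\mu^{-3/2}e^{iS}|u|^{2}u$ presumes $\mu>0$; since (\ref{opt4}) gives $\mu(t)=\mu(0)\exp\bigl(\int_{0}^{t}(2d-c)\bigr)$, this amounts to taking $\mu(0)>0$.
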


The following proposition establishes the uniqueness of classical solutions
for the nonautonomous nonlinear Schr$\ddot{\mbox{o}}$dinger equation (\ref%
{schpart1}).

\begin{proposition}
Assume that equations (\ref{opt1})-(\ref{opt5}) are satisfied (corresponding
to $l_{0}=-1$). If $c(t)$, $d(t)$ and $f(t)\in C^{1}([-T,T])$ for some $T>0$
and $h(t)=\lambda \mu (t)$ with $\lambda \in \mathbb{R}$, then the Cauchy
problem for the nonlinear nonautonomous Schr$\ddot{\mbox{o}}$dinger equation%
\begin{equation}
i\psi _{t}=-\psi _{xx}+b(t)x^{2}\psi -ic(t)x\psi _{x}-id(t)\psi -f(t)x\psi
+ig(t)\psi _{x}+h(t)|\psi |^{2}\psi  \label{exi1}
\end{equation}%
\begin{equation}
\psi (0,x)=\psi _{0}(x)  \label{exi2}
\end{equation}%
\ has a unique classical solution in the space $L_{t}^{\infty
}L_{x}^{q}([-T,T]\times \mathbb{R})$ for $q=2,\infty $.
\end{proposition}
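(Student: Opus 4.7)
The plan is to use Lemma~\ref{P1} to reduce the nonautonomous problem (\ref{exi1})-(\ref{exi2}) to the Cauchy problem for the standard one-dimensional cubic NLS and then to invoke classical well-posedness theory there. Under the hypotheses, the Riccati system (\ref{opt1})-(\ref{opt5}) has $C^1$ data on $[-T,T]$, so $c,g,\alpha,\delta,\kappa$ are $C^1$, and $\mu(t)=\mu(0)\exp\bigl(\int_{0}^{t}(2d(z)-c(z))\,dz\bigr)$ is continuous and never vanishes; in particular $\mu$ is bounded above and below on $[-T,T]$, so the gauge transformation
$$
\psi(t,x)=\frac{1}{\sqrt{\mu(t)}}\,e^{i(\alpha(t)x^{2}+\delta(t)x+\kappa(t))}\,u(t,x)
$$
of Lemma~\ref{P1} is well defined and pointwise invertible.

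By Lemma~\ref{P1} (with $l_{0}=-1$), $\psi$ is a classical solution of (\ref{exi1})-(\ref{exi2}) iff $u$ is a classical solution of $iu_{t}+u_{xx}-\lambda|u|^{2}u=0$ with initial data $u_{0}(x)=\sqrt{\mu(0)}\,e^{-i(\alpha(0)x^{2}+\delta(0)x+\kappa(0))}\psi_{0}(x)$. Assuming the initial datum $\psi_{0}$ lies in $\Sigma=\{f\in H^{1}(\mathbb{R}):xf\in L^{2}(\mathbb{R})\}$, a direct calculation shows $u_{0}\in H^{1}(\mathbb{R})$. The cubic NLS in one spatial dimension is globally well-posed in $H^{1}(\mathbb{R})$ (see, e.g., \cite{Caz} or \cite{Sulem:Sulem}), yielding a unique $u\in C(\mathbb{R};H^{1}(\mathbb{R}))$ with conservation of mass $\|u(t)\|_{L^{2}}=\|u_{0}\|_{L^{2}}$ and uniform-in-time $H^{1}$ control on $[-T,T]$.

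Now I would read off the two claimed bounds. Because $|\psi(t,x)|=\mu(t)^{-1/2}|u(t,x)|$, mass conservation together with the uniform lower bound on $\mu$ gives
$$
\sup_{t\in[-T,T]}\|\psi(t,\cdot)\|_{L^{2}}\le\Bigl(\inf_{[-T,T]}\mu\Bigr)^{-1/2}\|u_{0}\|_{L^{2}}<\infty,
$$
and the Sobolev embedding $H^{1}(\mathbb{R})\hookrightarrow L^{\infty}(\mathbb{R})$ together with the uniform $H^{1}$ bound on $u$ gives $\psi\in L_{t}^{\infty}L_{x}^{\infty}([-T,T]\times\mathbb{R})$. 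Uniqueness of $\psi$ in these spaces transfers directly from uniqueness of $u$ in $C([-T,T];H^{1}(\mathbb{R}))$, since the map (\ref{tranpar1}) is a bijection on classical solutions.

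The main subtlety I anticipate is the Sobolev regularity of the transformed initial datum. Computing the derivative,
$$
\partial_{x}u_{0}(x)=\sqrt{\mu(0)}\,e^{-i(\alpha(0)x^{2}+\delta(0)x+\kappa(0))}\bigl(\partial_{x}\psi_{0}(x)-i(2\alpha(0)x+\delta(0))\psi_{0}(x)\bigr),
$$
so $u_{0}\in H^{1}$ whenever $\alpha(0)\neq0$ forces the decay condition $x\psi_{0}\in L^{2}$; this is the implicit hypothesis $\psi_{0}\in\Sigma$ singled out above. Once $u_{0}\in H^{1}$ is in hand, the rest is a routine transfer of 1D cubic NLS well-posedness through the explicit gauge transformation of Lemma~\ref{P1}, with the constants in the $L_{t}^{\infty}L_{x}^{q}$ bounds depending only on $\|\psi_{0}\|_{\Sigma}$, $\lambda$, and the moduli of continuity of $\mu,\alpha,\delta,\kappa$ on $[-T,T]$.
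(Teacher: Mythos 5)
Your reduction to the standard cubic NLS via the gauge transformation of Lemma~\ref{P1} is exactly the paper's starting point, but after that the two arguments diverge, and yours does not quite prove the stated claim. The paper's proof is a pure uniqueness argument: it takes two arbitrary classical solutions $\psi^{1},\psi^{2}\in C^{2}_{t,x}$ lying in $L^{\infty}_{t}L^{q}_{x}([-T,T]\times\mathbb{R})$ for $q=2,\infty$, pushes both through the transformation to get classical solutions $u^{1},u^{2}$ of the standard NLS with the same data, observes $\Vert u^{j}\Vert_{L^{\infty}_{t}L^{q}_{x}}\le M\Vert\psi^{j}\Vert_{L^{\infty}_{t}L^{q}_{x}}$ with $M=\max_{[-T,T]}\mu$, and then invokes the classical (unconditional) uniqueness theorem for the standard NLS in that class, as in Tao's book, to conclude $u^{1}=u^{2}$ and hence $\psi^{1}=\psi^{2}$. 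No hypothesis on $\psi_{0}$ beyond the existence of such solutions is used.

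Your route instead appeals to global $H^{1}$ well-posedness of the 1D cubic NLS, and this creates two concrete mismatches with the statement. First, you must assume $\psi_{0}\in\Sigma$ (so that $u_{0}\in H^{1}$ despite the factor $e^{-i\alpha(0)x^{2}}$), which is an added hypothesis not present in the proposition. Second, and more importantly, the uniqueness you obtain lives in $C([-T,T];H^{1})$, whereas the proposition asserts uniqueness among classical solutions in $L^{\infty}_{t}L^{q}_{x}$ for $q=2,\infty$. A classical solution that is bounded in $L^{2}_{x}$ and $L^{\infty}_{x}$ uniformly in time need not have $\partial_{x}u(t,\cdot)\in L^{2}$, so it can escape the $H^{1}$ uniqueness class; your argument therefore does not exclude a second classical solution of \eqref{exi1}--\eqref{exi2} in the stated space. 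To close this you would need precisely the unconditional uniqueness result for classical $L^{\infty}_{t}L^{q}_{x}$ solutions of the standard NLS that the paper cites, at which point the $H^{1}$ machinery (and the hypothesis $\psi_{0}\in\Sigma$) becomes unnecessary for the uniqueness claim. What your argument does buy, which the paper's does not address, is an actual existence statement for $\Sigma$ data; but as written it proves a different proposition.
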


\begin{proof}
Let's consider $\psi ^{1},\psi ^{2}\in C_{t,x}^{2}([-T,T]\times \mathbb{R})$
classical solutions for the Cauchy problem (\ref{exi1})-(\ref{exi2}) in the
space $L_{t}^{\infty }L_{x}^{q}([-T,T]\times \mathbb{R})$ for $q=2,\infty $.
By Lemma \ref{P1} and the conditions in the coefficients $c(t)$, $d(t)$, $%
f(t)$ and $h(t)$, we have 
\begin{equation}
u^{j}(t,x)=\sqrt{\mu (t)}e^{-i(\alpha (t)x^{2}+\delta (t)x+\kappa (t))}\psi
^{j}(t,x)\in C_{t,x}^{2}([-T,T]\times \mathbb{R})\hspace{0.5cm}\mbox{with}%
\hspace{0.5cm}j=1,2,  \label{inv}
\end{equation}%
which are classical solutions for the Cauchy problem (\ref{schusu})-(\ref%
{schusu1}) on $[-T,T]$, and initial condition 
\begin{equation*}
u^{j}(0,x)=\sqrt{\mu (0)}e^{-i(\alpha (0)x^{2}+\delta (0)x+\kappa (0))}\psi
_{0}(x)\hspace{0.5cm}\mbox{for}\hspace{0.5cm}j=1,2.
\end{equation*}%
Therefore, for each $j=1,2$ we have 
\begin{equation}
\Vert u^{j}\Vert _{L_{t}^{\infty }L_{x}^{q}([-T,T]\times \mathbb{R})}\leq
M\Vert \psi ^{j}\Vert _{L_{t}^{\infty }L_{x}^{q}([-T,T]\times \mathbb{R})},%
\hspace{1cm}q=2,\infty ,  \notag
\end{equation}%
where $M>0$ is the maximum for $\mu (t)$ in the interval $[-T,T]$. Using the
classical uniqueness result given in \cite{Tao}, we have $u^{1}=u^{2}$, and
so, the final result is obtained by multiplying equation (\ref{inv}) by the
factor $e^{i(\alpha (t)x^{2}+\delta (t)x+\kappa (t))}/\sqrt{\mu (t)}$.
\end{proof}

Now we want to see how the dynamics of the solutions of the nonlinear Schr$%
\ddot{\mbox{o}}$dinger equation (\ref{schpart1}) change when the parameters
of dissipation, $d(t)$, and the nonlinear term, $h(t)$ change. We will use
the alternative Riccati system (\ref{ricafin1})-(\ref{ricafin4}) and
therefore Lemma 1.

\subsection{Perturbations of the bright soliton: Competition between
dissipation and nonlinearity}

Let's consider the nonautonomous nonlinear Schr$\ddot{\mbox{o}}$dinger
equation\ 
\begin{equation}
i\psi _{t}=-\psi _{xx}+\frac{x^{2}}{4}\left( \sin ^{2}t-\cos t\right) \psi
+ix\sin t\psi _{x}-id(t)\psi +h(t)|\psi |^{2}\psi ,\quad t,x\in \mathbb{R}.
\label{sch1}
\end{equation}%
\ Then the functions $\alpha $, $\delta $ and $\kappa $ are respectively:\ 
\begin{equation*}
\alpha (t)=\frac{\sin t}{4},\quad \quad \delta (t)=0,\quad \quad \kappa
(t)=\kappa (0).
\end{equation*}%
\ We will construct explicit solutions for (\ref{sch1}) using Lemma 1 and
one of the solutions for the NLS equation \ 
\begin{equation*}
iu_{t}+u_{xx}+3|u|^{2}u=0,\quad t,x\in \mathbb{R},
\end{equation*}%
that is given by%
\begin{equation*}
u(t,x)=\sqrt{-\frac{2v}{3}}\sech(\sqrt{-v}x)\exp (-vit),\quad v<0.
\end{equation*}

\subsubsection{Periodic Solutions for (\protect\ref{sch1}) with $d(t)=\sin t$
and $h(t)=-3e^{3-3\cos t}$}

In this case one obtains a solution of the following form:\ 
\begin{eqnarray}
\psi (t,x) &=&\exp \left[ \frac{3}{2}\left( \cos t-1\right) +i\left( \frac{%
x^{2}}{4}\sin t+\kappa (0)-vt\right) \right]  \label{Periodic1} \\
\ &\times &\sqrt{-\frac{2v}{3}}\sech(\sqrt{-v}x),  \notag
\end{eqnarray}

see Figure \ref{f3}. 
\begin{figure}[h]
\centering
\subfigure[Solution with values $v=-2$
and $\protect\kappa(0)=0$.]{\includegraphics[scale=0.38]{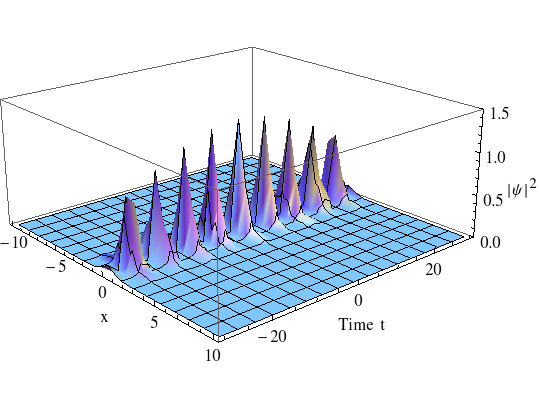}} 
\subfigure[Contour of the
solution.]{\includegraphics[scale=0.38]{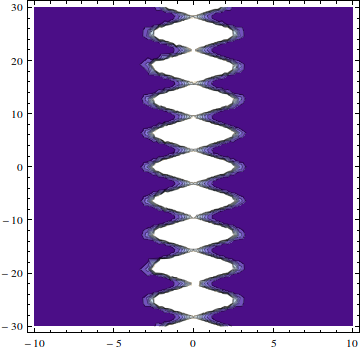}}
\caption{Dynamics of the solution (\protect\ref{Periodic1}) for equation (%
\protect\ref{sch1}) with $d(t)=\sin t$ and $h(t)=-3e^{3-3\cos t}$.}
\label{f3}
\end{figure}

\subsubsection{Solutions with fast decay (\protect\ref{sch1}) with $%
d(t)=(4t-\sin t)/2$ and $h(t)=-3e^{2t^{2}}$}

The choice of the parameters allows us to construct a solution with fast
decay for large values of time:\ 
\begin{equation}
\psi (t,x)=\sqrt{-\frac{2v}{3}}\sech(\sqrt{-v}x)\exp \left[ i\left( \frac{%
x^{2}}{4}\sin t+\kappa (0)-vt\right) -t^{2}\right] ,  \label{FastDecay}
\end{equation}

see Figure \ref{f4}. 

\begin{figure}[h!]
\centering
\subfigure[Solution with values $v=-2$
and $\protect\kappa(0)=0$.]{\includegraphics[scale=0.38]{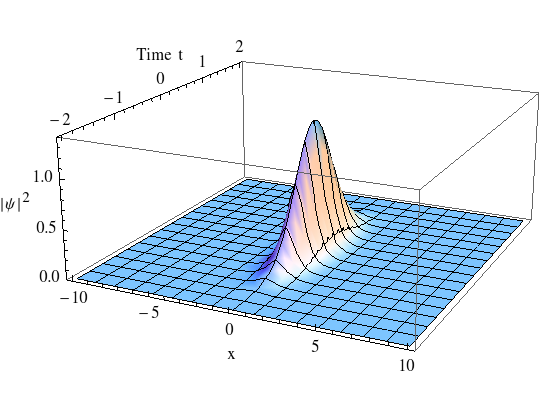}} 
\subfigure[Contour of the
solution.]{\includegraphics[scale=0.38]{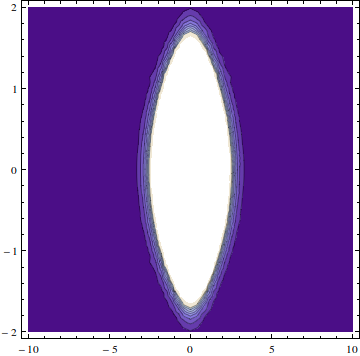}}
\caption{Dynamics of the solution (\protect\ref{FastDecay}) for equation (%
\protect\ref{sch1}) with $d(t)=(4t-\sin t)/2$ and $h(t)=-3e^{2t^{2}}$.}
\label{f4}
\end{figure}

\subsection{Perturbations of the Peregrine soliton}

We are interested to see how the parameters change the dynamics of solutions
for (\ref{schpart1}), and for this end we consider \ 
\begin{equation}
i\psi _{t}=-\psi _{xx}+x^{2}\left( t^{2}-1/2\right) \psi +2it\psi
_{x}(x+e^{t^{2}})-id(t)\psi -e^{t^{2}}x\psi +h(t)|\psi |^{2}\psi ,\quad
t,x\in \mathbb{R}.  \label{sch2}
\end{equation}%
\ As before we can find explicitly 
\begin{equation*}
\alpha (t)=t/2,\quad \delta (t)=te^{t^{2}},\quad \kappa (t)=\kappa
(0)+e^{2t^{2}}\left( 2t-\sqrt{2}D(\sqrt{2}t)\right) /8,
\end{equation*}%
\ where $D(t)=e^{-t^{2}}\int_{0}^{t}e^{z^{2}}dz$ is the Dawson function. We
will construct explicit solutions for (\ref{sch2}) using Lemma 2 to reduce
it to \ 
\begin{equation}
iu_{t}+u_{xx}+2|u|^{2}u=0,\quad t,x\in \mathbb{R},  \label{Pere}
\end{equation}%
and one of the solutions for (\ref{Pere}) is%
\begin{equation}
u(t,x)=A\exp (2iA^{2}t)\left( \frac{3+16iA^{2}t-16A^{4}t^{2}-4A^{2}x^{2}}{%
1+16A^{4}t^{2}+4A^{2}x^{2}}\right) ,\quad A\in \mathbb{R}.  \notag
\end{equation}

\subsubsection{Peregrine-type soliton for (\protect\ref{sch2}) with $%
d(t)=\tanh t-t$ and $h(t)=-8\cosh ^{2}t$}

The correct choice of the parameters $d(t)$ and $h(t)$ allows us to
construct solutions with properties similar to those of the classical
Peregrine soliton, as can be seen in Figure \ref{part3}. The solution for
this case will be given by\ 
\begin{eqnarray}
\psi (t,x) &=&\exp \left[ i\left( \frac{t}{2}x^{2}+te^{t^{2}}x+\kappa (0)+%
\frac{1}{8}e^{2t^{2}}\left( 2t-\sqrt{2}D(\sqrt{2}t)\right) \right) \right]
\label{Peregrine1} \\
\ &\times &\frac{A}{2}\exp (2A^{2}t)\left( \frac{%
3+16iA^{2}t-16A^{4}t^{2}-4A^{2}x^{2}}{1+16A^{4}t^{2}+4A^{2}x^{2}}\right) %
\sech t.  \notag
\end{eqnarray}


\begin{figure}[h!]
\centering
\subfigure[Profile of Peregrine soliton at
$t=0$.]{\includegraphics[scale=0.39]{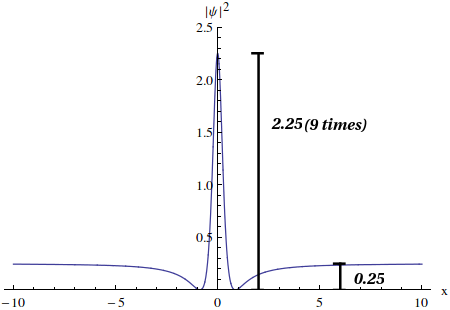}} 
\subfigure[Profile of
Peregrine soliton at $x=0$.]{\includegraphics[scale=0.39]{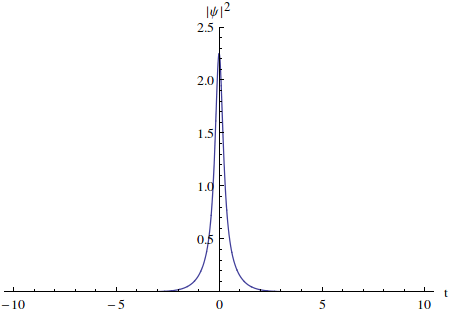}} 
\subfigure[3D view of Peregrine
soliton.]{\includegraphics[scale=0.38]{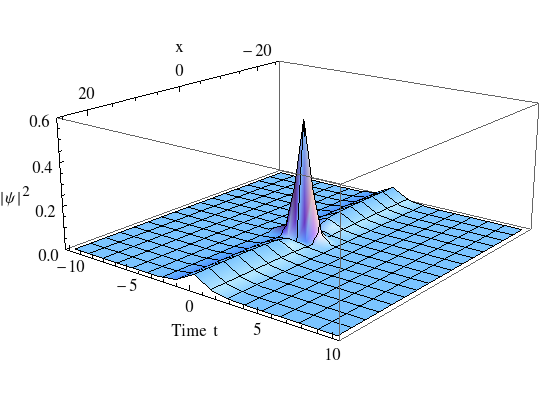}}
\caption{Peregrine soliton solution for (\protect\ref{Peregrine1}) with $A= 
\foreignlanguage{english}{0.5}$ and $\protect\kappa (0)=0$ for (\protect\ref%
{sch2}) with $d(t)=\tanh (t)-t$ and $h(t)=-8\cosh ^{2}(t)$.}
\label{part3}
\end{figure}

\subsubsection{Dynamics of the solution for (\protect\ref{sch2}) with $%
d(t)=-(\sin 2t+t)$ and $h(t)=-2e^{-2\sin ^{2}t}$}

Here we see how the solutions are perturbated, see Figure \ref{part2} in the
appendix. The solution is given by\ 
\begin{eqnarray}
\psi (t,x) &=&A\exp \left[ i\left( \frac{t}{2}x^{2}+te^{t^{2}}x+\kappa (0)+%
\frac{1}{8}e^{2t^{2}}\left( 2t-\sqrt{2}D(\sqrt{2}t)\right) \right) \right]
\label{Peregrine2} \\
\ &\times &\left( \frac{3+16iA^{2}t-16A^{4}t^{2}-4A^{2}x^{2}}{%
1+16A^{4}t^{2}+4A^{2}x^{2}}\right) \exp \left( 2iA^{2}t+\sin ^{2}t\right) . 
\notag
\end{eqnarray}%
%
%
%
%
%
%
%
%
%
%
%
%
%
%
%
%
%
%
%
%
%
%
%
%
%
%
%
%
%
%
%
%
%
%
%
%
%
%
%
%
%
\begin{figure}[h]
\centering
\subfigure[Profile of perturbated Peregrine soliton in the times $t=1$ and
$t=5$.]{\includegraphics[scale=0.38]{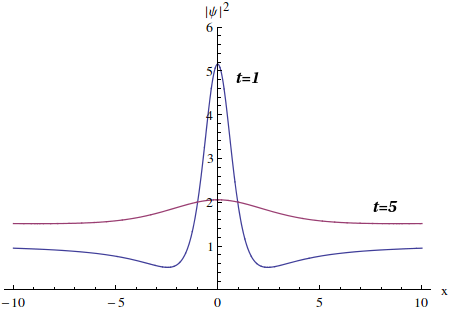}} 
\subfigure[Profile of
perturbated Peregrine soliton at
$x=0$.]{\includegraphics[scale=0.38]{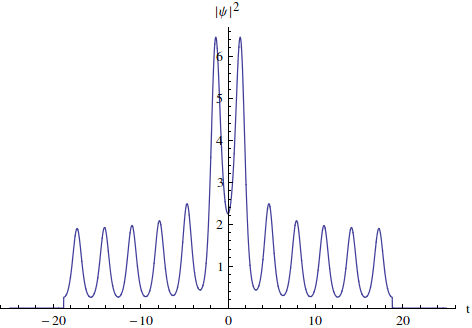}} 
\subfigure[3D view of
perturbated Peregrine soliton.]{\includegraphics[scale=0.36]{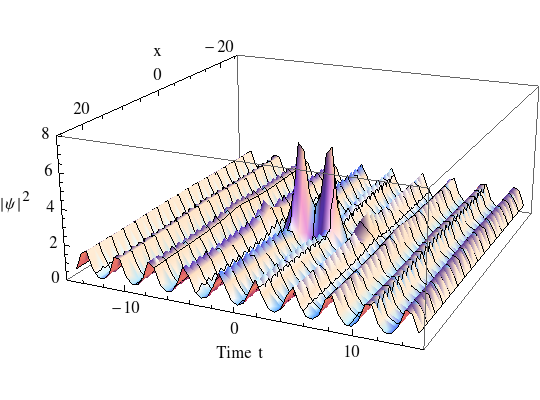}}
\caption{Dynamics of perturbated Peregrine soliton solution (\protect\ref%
{Peregrine2}) with $A=\foreignlanguage{english}{0.5}$ and $\protect\kappa %
(0)=0$ for (\protect\ref{sch2}) with $d(t)=-(\sin (2t)-t)$ and $%
h(t)=-2e^{-2\sin ^{2}(t)}$. }
\label{part2}
\end{figure}

\section{Final Remarks}

In this work, inspired by the work of Mahric on multiparameter solutions for
the linear Schr\"{o}dinger equation with quadratic potential, we have
established a relationship between solutions with parameters of
Riccati-Ermakov systems with the dynamics of nonlinear Schr\"{o}dinger
equations with variable coefficients of the form (\ref{NLSVC}). We have
shown that for special coefficients of (\ref{NLSVC}) it is possible to find
explicit solutions presenting blow up, periodic soliton solutions with
bending properties and more. This work should motivate further analytical
and numerical studies looking to clarify the connections on the dynamics of
variable-coefficient NLS with the dynamics of Riccati-Ermakov systems.

\begin{acknowledgement}
The second named author was supported by the Simons Foundation \#316295. The
authors thanks Drs. A. Mahalov, S. Roudenko and S. K. Suslov for many
different valuable discussions of this project and valuable mentorship.
\end{acknowledgement}

\section{Appendix: Nonlinear Coupled Riccati-Ermakov Systems and Similarity
Transformations}

All the formulas from this appendix have been verified previously in \cite%
{Ko:Su;su}.

\subsection{Modified Riccati system and a similarity transformation}

In section 4 we need a slightly modified nonlinear coupled Riccati system
that includes for convenience of our results a parameter $l_{0}=\pm 1$ (the
case $l_{0}=1$ has already been considered in \cite{Cor-Sot:Lop:Sua:Sus}, 
\cite{Sua1}, \cite{Sus}): \ 
\begin{equation}
\dfrac{d\alpha }{dt}+b(t)+2c(t)\alpha +4a(t)\alpha ^{2}=0,  \label{rica1}
\end{equation}%
\begin{equation}
\dfrac{d\beta }{dt}+(c(t)+4a(t)\alpha (t))\beta =0,  \label{rica2}
\end{equation}%
\begin{equation}
\dfrac{d\gamma }{dt}+l_{0}a(t)\beta ^{2}(t)=0,\quad l_{0}=\pm 1,
\label{rica3}
\end{equation}%
\begin{equation}
\dfrac{d\delta }{dt}+(c(t)+4a(t)\alpha (t))\delta =f(t)+2\alpha (t)g(t),
\label{rica4}
\end{equation}%
\begin{equation}
\dfrac{d\varepsilon }{dt}=(g(t)-2a(t)\delta (t))\beta (t),  \label{rica5}
\end{equation}%
\begin{equation}
\dfrac{d\kappa }{dt}=g(t)\delta (t)-a(t)\delta ^{2}(t).  \label{rica6}
\end{equation}%
\ Considering the standard substitution\ 
\begin{equation}
\alpha =\dfrac{1}{4a(t)}\dfrac{\mu ^{\prime }(t)}{\mu (t)}-\dfrac{d(t)}{2a(t)%
},  \label{sus1}
\end{equation}%
it follows that the Riccati equation (\ref{rica1}) becomes\ 
\begin{equation}
\mu ^{\prime \prime }-\tau (t)\mu ^{\prime }+4\sigma (t)\mu =0,
\label{carac1}
\end{equation}%
with\ 
\begin{equation}
\tau (t)=\frac{a^{\prime }}{a}-2c+4d,\hspace{1cm}\sigma (t)=ab-cd+d^{2}+%
\frac{d}{2}\left( \frac{a^{\prime }}{a}-\frac{d^{\prime }}{d}\right) .
\end{equation}%
\ We will refer to (\ref{carac1}) as the characteristic equation of the
Riccati system. Here $a(t)$, $b(t)$, $c(t)$, $d(t)$, $f(t)$ and $g(t)$ are
real value functions depending only on the variable $t$. A solution of the
Riccati system (\ref{rica1})-(\ref{rica6}) with multiparameters is given by
the following expressions (with the respective inclusion of the parameter $%
l_{0}$) \cite{Cor-Sot:Lop:Sua:Sus}, \cite{Sua1}, \cite{Sus}:\ 
\begin{equation}
\mu \left( t\right) =2\mu \left( 0\right) \mu _{0}\left( t\right) \left(
\alpha \left( 0\right) +\gamma _{0}\left( t\right) \right) ,  \label{mu}
\end{equation}%
\begin{equation}
\alpha \left( t\right) =\alpha _{0}\left( t\right) -\frac{\beta
_{0}^{2}\left( t\right) }{4\left( \alpha \left( 0\right) +\gamma _{0}\left(
t\right) \right) },  \label{alpha}
\end{equation}%
\begin{equation}
\beta \left( t\right) =-\frac{\beta \left( 0\right) \beta _{0}\left(
t\right) }{2\left( \alpha \left( 0\right) +\gamma _{0}\left( t\right)
\right) }=\frac{\beta \left( 0\right) \mu \left( 0\right) }{\mu \left(
t\right) }w\left( t\right) ,  \label{beta}
\end{equation}%
\begin{equation}
\gamma \left( t\right) =l_{0}\gamma \left( 0\right) -\frac{l_{0}\beta
^{2}\left( 0\right) }{4\left( \alpha \left( 0\right) +\gamma _{0}\left(
t\right) \right) },\quad l_{0}=\pm 1,  \label{gamma}
\end{equation}%
\begin{equation}
\delta \left( t\right) =\delta _{0}\left( t\right) -\frac{\beta _{0}\left(
t\right) \left( \delta \left( 0\right) +\varepsilon _{0}\left( t\right)
\right) }{2\left( \alpha \left( 0\right) +\gamma _{0}\left( t\right) \right) 
},  \label{delta}
\end{equation}%
\begin{equation}
\varepsilon \left( t\right) =\varepsilon \left( 0\right) -\frac{\beta \left(
0\right) \left( \delta \left( 0\right) +\varepsilon _{0}\left( t\right)
\right) }{2\left( \alpha \left( 0\right) +\gamma _{0}\left( t\right) \right) 
},  \label{epsilon}
\end{equation}%
\begin{equation}
\kappa \left( t\right) =\kappa \left( 0\right) +\kappa _{0}\left( t\right) -%
\frac{\left( \delta \left( 0\right) +\varepsilon _{0}\left( t\right) \right)
^{2}}{4\left( \alpha \left( 0\right) +\gamma _{0}\left( t\right) \right) },
\label{kappa}
\end{equation}%
\ subject to the initial arbitrary conditions $\mu \left( 0\right) ,$ $%
\alpha \left( 0\right) ,$ $\beta \left( 0\right) \neq 0,$ $\gamma (0),$ $%
\delta (0),$ $\varepsilon (0)$ and $\kappa (0)$. $\alpha _{0}$, $\beta _{0}$%
, $\gamma _{0}$, $\delta _{0}$, $\varepsilon _{0}$ and $\kappa _{0}$ are
given explicitly by\ 
\begin{equation}
\alpha _{0}\left( t\right) =\frac{1}{4a\left( t\right) }\frac{\mu
_{0}^{\prime }\left( t\right) }{\mu _{0}\left( t\right) }-\frac{d\left(
t\right) }{2a\left( t\right) },  \label{alpha0}
\end{equation}%
\begin{equation}
\beta _{0}\left( t\right) =-\frac{w\left( t\right) }{\mu _{0}\left( t\right) 
},\quad w\left( t\right) =\exp \left( -\int_{0}^{t}\left( c\left( s\right)
-2d\left( s\right) \right) \ ds\right) ,  \label{beta0}
\end{equation}%
\begin{equation}
\gamma _{0}\left( t\right) =\frac{d\left( 0\right) }{2a\left( 0\right) }+%
\frac{1}{2\mu _{1}\left( 0\right) }\frac{\mu _{1}\left( t\right) }{\mu
_{0}\left( t\right) },  \label{gamma0}
\end{equation}%
\begin{equation}
\delta _{0}\left( t\right) =\frac{w\left( t\right) }{\mu _{0}\left( t\right) 
}\ \ \int_{0}^{t}\left[ \left( f\left( s\right) -\frac{d\left( s\right) }{%
a\left( s\right) }g\left( s\right) \right) \mu _{0}\left( s\right) +\frac{%
g\left( s\right) }{2a\left( s\right) }\mu _{0}^{\prime }\left( s\right) %
\right] \ \frac{ds}{w\left( s\right) },  \label{delta0}
\end{equation}%
\begin{eqnarray}
\varepsilon _{0}\left( t\right) &=&-\frac{2a\left( t\right) w\left( t\right) 
}{\mu _{0}^{\prime }\left( t\right) }\delta _{0}\left( t\right)
+8\int_{0}^{t}\frac{a\left( s\right) \sigma \left( s\right) w\left( s\right) 
}{\left( \mu _{0}^{\prime }\left( s\right) \right) ^{2}}\left( \mu
_{0}\left( s\right) \delta _{0}\left( s\right) \right) \ ds  \label{epsilon0}
\\
&&+2\int_{0}^{t}\frac{a\left( s\right) w\left( s\right) }{\mu _{0}^{\prime
}\left( s\right) }\left[ f\left( s\right) -\frac{d\left( s\right) }{a\left(
s\right) }g\left( s\right) \right] \ ds,  \notag
\end{eqnarray}%
\begin{eqnarray}
\kappa _{0}\left( t\right) &=&\frac{a\left( t\right) \mu _{0}\left( t\right) 
}{\mu _{0}^{\prime }\left( t\right) }\delta _{0}^{2}\left( t\right)
-4\int_{0}^{t}\frac{a\left( s\right) \sigma \left( s\right) }{\left( \mu
_{0}^{\prime }\left( s\right) \right) ^{2}}\left( \mu _{0}\left( s\right)
\delta _{0}\left( s\right) \right) ^{2}\ ds  \label{kappa0} \\
&&\quad -2\int_{0}^{t}\frac{a\left( s\right) }{\mu _{0}^{\prime }\left(
s\right) }\left( \mu _{0}\left( s\right) \delta _{0}\left( s\right) \right) %
\left[ f\left( s\right) -\frac{d\left( s\right) }{a\left( s\right) }g\left(
s\right) \right] \ ds,  \notag
\end{eqnarray}%
\ with $\delta _{0}\left( 0\right) =g_{0}\left( 0\right) /\left( 2a\left(
0\right) \right) ,$ $\varepsilon _{0}\left( 0\right) =-\delta _{0}\left(
0\right) ,$ $\kappa _{0}\left( 0\right) =0.$ Here $\mu _{0}$ and $\mu _{1}$
represent the fundamental solution of the characteristic equation subject to
the initial conditions $\mu _{0}(0)=0$, $\mu _{0}^{\prime }(0)=2a(0)\neq 0$
and $\mu _{1}(0)\neq 0$, $\mu _{1}^{\prime }(0)=0$.

Using the system (\ref{alpha})-(\ref{kappa}), in \cite{Sus} we see a
generalized lens transformation is presented. Next we recall this result
(here we present a slight perturbation introducing the parameter $l_{0}=\pm
1 $ in order to use Peregrine-type soliton solutions):

\begin{lemma}[$l_{0}=1$, \protect\cite{Sus}]
Assume that $h(t)=\lambda a(t)\beta ^{2}(t)\mu (t)$ with $\lambda \in 
\mathbb{R}$. Then the substitution\ 
\begin{equation}
\psi (t,x)=\dfrac{1}{\sqrt{\mu (t)}}e^{i(\alpha (t)x^{2}+\delta (t)x+\kappa
(t))}u(\tau ,\xi ),  \label{tra1}
\end{equation}%
\ where $\xi =\beta \left( t\right) x+\varepsilon \left( t\right) $ and $%
\tau =\gamma \left( t\right) $, transforms the equation \ 
\begin{equation*}
i\psi _{t}=-a(t)\psi _{xx}+b(t)x^{2}\psi -ic(t)x\psi _{x}-id(t)\psi
-f(t)x\psi +ig(t)\psi _{x}+h(t)|\psi |^{2}\psi
\end{equation*}%
\ into the standard Schr$\ddot{\mbox{o}}$dinger equation 
\begin{equation}
\\
\ iu_{\tau }-l_{0}u_{\xi \xi }+l_{0}\lambda |u|^{2}u=0,\quad \quad l_{0}=\pm
1,  \label{estandar}
\end{equation}%
\ as long as $\alpha ,$ $\beta ,$ $\gamma ,$ $\delta ,$ $\varepsilon $ and $%
\kappa $ satisfy the Riccati system (\ref{rica1})-(\ref{rica6}) and also
equation (\ref{sus1}).
\end{lemma}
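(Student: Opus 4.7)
The plan is to verify the lemma by direct substitution: insert the ansatz (\ref{tra1}) into the generalized NLS and show that, after invoking the Riccati--Ermakov system (\ref{rica1})--(\ref{rica6}), the substitution (\ref{sus1}), and the balance condition $h = \lambda a \beta^2 \mu$, the equation reduces exactly to (\ref{estandar}).

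First I would compute the required derivatives. Writing $P(t,x) = \mu(t)^{-1/2} e^{iS(t,x)}$ with $S = \alpha x^{2} + \delta x + \kappa$, the product and chain rules yield
\begin{align*}
\psi_{t} &= P\Bigl(-\tfrac{\mu'}{2\mu}\,u + i S_{t}\,u + \gamma'\,u_{\tau} + (\beta' x + \varepsilon')\,u_{\xi}\Bigr),\\
\psi_{x} &= P\bigl(iS_{x}\,u + \beta\,u_{\xi}\bigr),\\
\psi_{xx} &= P\bigl(-S_{x}^{2}\,u + iS_{xx}\,u + 2i\beta S_{x}\,u_{\xi} + \beta^{2}\,u_{\xi\xi}\bigr),
\end{align*}
where $S_{x} = 2\alpha x + \delta$, $S_{xx} = 2\alpha$, and $S_{t} = \alpha' x^{2} + \delta' x + \kappa'$. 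Substituting these into the generalized NLS, dividing through by the common factor $P$, and moving everything to one side produces a single identity that I would group according to the type of $u$-term present: $u_{\xi\xi}$, $u_{\tau}$, $u_{\xi}$, $|u|^{2}u$, and $u$ alone.

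Next I would match coefficients. The $u_{\xi\xi}$ and $u_{\tau}$ terms combine to $i\gamma' u_{\tau} + a\beta^{2} u_{\xi\xi}$, and after division by $\gamma'$ these reproduce the principal part of (\ref{estandar}) precisely when $a\beta^{2}/\gamma' = -l_{0}$, i.e.\ (\ref{rica3}). The $|u|^{2}u$ coefficient $-h/\mu$, again divided by $\gamma'$, must equal $l_{0}\lambda$; using (\ref{rica3}) this becomes the assumed balance $h = \lambda a \beta^{2} \mu$. The $u_{\xi}$ contribution splits into an $x$-linear and an $x$-independent piece: the former collapses to (\ref{rica2}) (equation for $\beta$) and the latter to (\ref{rica5}) (equation for $\varepsilon$). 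Finally, the $u$-alone contribution is a quadratic polynomial in $x$ with complex coefficients; demanding its vanishing at each power of $x$ yields (\ref{rica1}) at order $x^{2}$, (\ref{rica4}) at order $x$, (\ref{rica6}) from the real part at $x=0$, and the phase--amplitude compatibility (\ref{sus1}) from the imaginary part at $x=0$.

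The main obstacle is not conceptual but bookkeeping: the expansion generates many cross terms arising from $S_{x}^{2}$, $c x S_{x}$, and $g S_{x}$, and these must be regrouped carefully by powers of $x$ so that each Riccati--Ermakov equation emerges cleanly from exactly one coefficient. Once this organized cancellation is carried out, the argument is fully reversible---the identity just verified is symmetric in $\psi$ and $u$---so the biconditional that $\psi$ solves the generalized NLS if and only if $u$ solves (\ref{estandar}) follows at once.
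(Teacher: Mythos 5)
Your verification is correct: the derivative computations, the grouping by $u_{\tau}$, $u_{\xi\xi}$, $u_{\xi}$, $|u|^{2}u$ and $u$, and the identification of each Riccati equation (\ref{rica1})--(\ref{rica6}) plus (\ref{sus1}) with the corresponding power of $x$ and real/imaginary part all check out, including the signs $a\beta^{2}/\gamma'=-l_{0}$ and $-h/(\mu\gamma')=l_{0}\lambda$. The paper itself recalls this lemma from \cite{Sus} without proof, but your direct-substitution argument is exactly the computation the paper performs when it derives the analogous systems (\ref{nlse3})--(\ref{nlse8}) and (\ref{Erma11})--(\ref{Erma17}) in the proofs of Theorems 1 and 2, so your approach is essentially the same as the paper's.
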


\subsection{Ermakov System and a Similarity Transformation}

We recall the following useful results for sections 2 and 3.

\begin{lemma}[\protect\cite{Lan:Lop:Sus}, \protect\cite{Lo:Su:VeSy} and 
\protect\cite{Lop:Sus:VegaGroup}]
The following nonlinear coupled system (Ermakov system) 
\begin{align}
& \frac{d\alpha }{dt}+b+2c\alpha +4a\alpha ^{2}=c_{0}a\beta ^{4},
\label{Erma11} \\
& \frac{d\beta }{dt}+\left( c+4a\alpha \right) \beta =0,  \label{Erma12} \\
& \frac{d\gamma }{dt}+a\beta ^{2}=0,  \label{Erma13} \\
& \frac{d\delta }{dt}+\left( c+4a\alpha \right) \delta =f+2cg+2c_{0}a\beta
^{3}\varepsilon ,  \label{Erma14} \\
& \frac{d\varepsilon }{dt}=\left( g-2a\delta \right) \beta ,  \label{Erma15}
\\
& \frac{d\kappa }{dt}=g\delta -a\delta ^{2}+c_{0}a\beta ^{2}\varepsilon ^{2}
\label{Erma16} \\
& \alpha \left( t\right) =-\frac{1}{4a\left( t\right) }\frac{\mu ^{\prime
}\left( t\right) }{\mu \left( t\right) }-\frac{d\left( t\right) }{2a\left(
t\right) }  \label{Erma17}
\end{align}

admits the following multiparameter solution given explicitly by%
\begin{align}
& \mu \left( t\right) =\mu _{0}(t)\mu (0)\sqrt{4\left( \gamma _{0}(t)+\alpha
(0)\right) ^{2}+\beta ^{4}(0)},  \label{SErma11} \\
& \alpha \left( t\right) =\alpha _{0}(t)-\frac{\beta _{0}^{2}(t)\left(
\gamma _{0}(t)+\alpha (0)\right) }{4\left( \gamma _{0}(t)+\alpha (0)\right)
^{2}+\beta ^{4}(0)},  \label{SErma12} \\
& \beta \left( t\right) =-\frac{\beta \left( 0\right) \beta _{0}(t)}{\sqrt{%
4\left( \gamma _{0}(t)+\alpha \left( 0\right) \right) ^{2}-\beta ^{4}(0)}},
\label{SErma13} \\
& \gamma \left( t\right) =\gamma \left( 0\right) -\frac{1}{2}\arctan \frac{%
\beta ^{2}(0)}{2(\gamma _{0}(t)+\alpha (0))}  \label{SErma14}
\end{align}%
and 
\begin{align}
& \delta \left( t\right) =\delta _{0}(t)-\beta _{0}(t)\frac{\varepsilon
(0)\beta ^{3}(0)+2(\gamma _{0}(t)+\alpha (0))(\varepsilon _{0}(t)+\delta (0))%
}{4(\gamma _{0}(t)+\alpha (0))^{2}+\beta ^{4}(0)}  \label{SErma16} \\
& \varepsilon \left( t\right) =\frac{-\beta (0)\left( \delta (0)+\varepsilon
_{0}(t)\right) +2\varepsilon (0)\left( \gamma _{0}(t)+\alpha (0)\right) }{%
\sqrt{4\left( \gamma _{0}(t)+\alpha (0)\right) ^{2}+\beta ^{4}(0)}},
\label{SErma17} \\
& \kappa \left( t\right) =\kappa _{0}(t)+\kappa (0)-\frac{\beta
^{3}(0)\varepsilon (0)(\varepsilon _{0}(t)+\delta (0))}{4(\gamma
_{0}(t)+\alpha (0))^{2}+\beta ^{4}(0)}  \label{SErma18} \\
& \qquad \quad \quad +\frac{(\gamma _{0}(t)+\alpha (0))\left[ \beta
^{2}(0)\varepsilon ^{2}(0)-(\varepsilon _{0}(t)+\delta (0))^{2}\right] }{%
4(\gamma _{0}(t)+\alpha (0))^{2}+\beta ^{4}(0)},  \notag
\end{align}%
subject to arbitrary initial data $\mu \left( 0\right) ,$ $\alpha \left(
0\right) ,$ $\beta \left( 0\right) \neq 0,$ $\gamma \left( 0\right) ,$ $%
\delta \left( 0\right) ,$ $\varepsilon \left( 0\right) ,$ $\kappa \left(
0\right) $ where $\alpha _{0}(t),$ $\beta _{0}(t),$ $\gamma _{0}(t),$ $%
\delta _{0}(t),$ $\epsilon _{0}(t)$ and $\kappa _{0}(t)$ are given by (\ref%
{alpha0})-(\ref{kappa0}).
\end{lemma}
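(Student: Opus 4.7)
The plan is to verify the claimed solution by direct substitution, exploiting the fact that the tuple $(\alpha_{0},\beta_{0},\gamma_{0},\delta_{0},\varepsilon_{0},\kappa_{0})$ given by (\ref{alpha0})--(\ref{kappa0}) already solves the ``unperturbed'' Riccati system, namely (\ref{Erma11})--(\ref{Erma16}) with $c_{0}=0$. The proposed Ermakov solution is then a nonlinear superposition built around the single scalar quantity $\rho(t)^{2}:=4(\gamma_{0}(t)+\alpha(0))^{2}+\beta^{4}(0)$; once this observation is made, all of (\ref{SErma11})--(\ref{SErma18}) become algebraic manipulations of $\alpha_{0},\beta_{0},\gamma_{0},\delta_{0},\varepsilon_{0},\kappa_{0}$ and $\rho$.

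First I would tackle the Ermakov-perturbed Riccati (\ref{Erma11}). Writing $\alpha=\alpha_{0}-\beta_{0}^{2}\,u$ with $u:=(\gamma_{0}+\alpha(0))/\rho^{2}$, the unperturbed identity $\alpha_{0}'+b+2c\alpha_{0}+4a\alpha_{0}^{2}=0$ together with $(\beta_{0}^{2})'=-2(c+4a\alpha_{0})\beta_{0}^{2}$ reduces (\ref{Erma11}) to the scalar identity $-\beta_{0}^{2}u'+4a\beta_{0}^{4}u^{2}=c_{0}a\beta^{4}$. Differentiating $u$ directly and using $\gamma_{0}'=-a\beta_{0}^{2}$ and $\rho\rho'=4(\gamma_{0}+\alpha(0))\gamma_{0}'$ produces $-u'+4a\beta_{0}^{2}u^{2}=a\beta_{0}^{2}\beta^{4}(0)/\rho^{4}$, which matches the required right-hand side in view of the proposed $\beta(t)=-\beta(0)\beta_{0}(t)/\rho(t)$. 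This is where the Ermakov-Lewis invariant structure is hidden, and it is the main conceptual and computational obstacle of the proof; everything downstream is essentially bookkeeping.

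Once (\ref{Erma11}) is verified, equation (\ref{Erma12}) follows by taking the logarithmic derivative of $\beta=-\beta(0)\beta_{0}/\rho$ and comparing with $\beta_{0}'+(c+4a\alpha_{0})\beta_{0}=0$ and the expression $\alpha-\alpha_{0}=-\beta_{0}^{2}u$ just obtained. Equation (\ref{Erma13}) then reduces, via $\beta^{2}=\beta^{2}(0)\beta_{0}^{2}/\rho^{2}$, to an elementary $\arctan$ integral: one checks that the derivative of $\tfrac{1}{2}\arctan\!\bigl(\beta^{2}(0)/(2(\gamma_{0}+\alpha(0)))\bigr)$ equals $-a\beta^{2}$, which yields (\ref{SErma14}). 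Finally, the consistency between the proposed $\mu(t)=\mu(0)\mu_{0}(t)\rho(t)$ in (\ref{SErma11}) and the required relation (\ref{Erma17}) is an immediate consequence of the already-verified formula for $\alpha$ together with $\alpha_{0}=-\mu_{0}'/(4a\mu_{0})-d/(2a)$.

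For the affine block (\ref{Erma14})--(\ref{Erma16}), I would use variation of parameters. Since (\ref{Erma14}) is linear in $\delta$ with the same homogeneous part as (\ref{Erma12}), the solution splits as $\delta_{0}$ (a particular solution coming from the unperturbed Riccati reduction) plus a correction proportional to $\beta_{0}/\rho^{2}$ driven by the inhomogeneity $2c_{0}a\beta^{3}\varepsilon$; matching the proposed form (\ref{SErma16}) is a direct computation. Equation (\ref{Erma15}) is then solved by quadrature in $\varepsilon$, collapsing to the compact expression (\ref{SErma17}) after using $\gamma_{0}'=-a\beta_{0}^{2}$ to absorb the transcendental integrals. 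The last equation (\ref{Erma16}) is again a direct integration, producing $\kappa_{0}+\kappa(0)$ plus a rational correction exactly of the form (\ref{SErma18}). Throughout, all these intermediate identities can be mechanically checked using the computer algebra package of \cite{Ko:Su;su}; the essential mathematical content lies entirely in the Ermakov identity of the first step.
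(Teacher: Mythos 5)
The paper does not actually prove this lemma: it is imported verbatim from \cite{Lan:Lop:Sus}, \cite{Lo:Su:VeSy} and \cite{Lop:Sus:VegaGroup}, with the remark that the formulas were checked by computer algebra in \cite{Ko:Su;su}. So your analytic verification is not "the paper's route" --- it is more than the paper offers, and the strategy is the right one: the subscript-$0$ functions solve the Riccati system in the special gauge $\alpha=\alpha_{0},\dots$, and the Ermakov solution is a rational perturbation of them governed by the single quantity $\rho^{2}=4(\gamma_{0}+\alpha(0))^{2}+\beta^{4}(0)$. I checked your central step: writing $\alpha=\alpha_{0}-\beta_{0}^{2}u$ with $u=(\gamma_{0}+\alpha(0))/\rho^{2}$ and using $\gamma_{0}'=-a\beta_{0}^{2}$ does collapse (\ref{Erma11}) to $-\beta_{0}^{2}u'+4a\beta_{0}^{4}u^{2}=a\beta_{0}^{4}\beta^{4}(0)/\rho^{4}$, and the downstream verifications of (\ref{Erma12}), (\ref{Erma13}) and the $\mu$--$\alpha$ relation go through exactly as you describe.

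There is, however, one concrete point you assert rather than check, and it is precisely the point a verification proof exists to catch. The right-hand side you must match is $c_{0}a\beta^{4}=c_{0}a\beta^{4}(0)\beta_{0}^{4}/\rho^{4}$, whereas your computation produces $a\beta^{4}(0)\beta_{0}^{4}/\rho^{4}$; these agree only when $c_{0}=1$. The stated solution formulas contain no $c_{0}$ at all, so for general $c_{0}$ the identity fails unless one replaces $\beta^{4}(0)$ by $c_{0}\beta^{4}(0)$ throughout (\ref{SErma11})--(\ref{SErma18}) (repeating your own calculation with $\rho^{2}=4(\gamma_{0}+\alpha(0))^{2}+c_{0}\beta^{4}(0)$ then yields exactly $c_{0}a\beta^{4}$). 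You should either state that you are proving the $c_{0}=1$ case or carry the $c_{0}$ explicitly. Relatedly, a careful run of your argument would also surface the internal sign inconsistencies in the statement --- the minus sign under the radical in (\ref{SErma13}) versus the plus sign in (\ref{SErma11}), (\ref{SErma17}), and the sign of the $\mu'/\mu$ term in (\ref{Erma17}) versus (\ref{sus1}) and (\ref{alpha0}); note that you have silently adopted the (\ref{Erma17}) sign for $\alpha_{0}$, which contradicts (\ref{alpha0}) and would break your own consistency check of $\mu$. Finally, for the affine block your "particular solution $\delta_{0}$" claim requires reading the inhomogeneity in (\ref{Erma14}) as $f+2\alpha g$ (as in (\ref{rica4}) and (\ref{SysD})) rather than the printed $f+2cg$. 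None of this undermines the architecture of your proof, but as written the key step claims a match that only holds after these normalizations are pinned down.
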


We will also need a 2D version of the results above for the blow-up results
of Section 2:

\begin{lemma}
(\cite{Ma:Suslov})The nonlinear equation%
\begin{align}
i\psi _{t}& =-a\left( \psi _{xx}+\psi _{yy}\right) +b\left(
x^{2}+y^{2}\right) \psi -ic\left( x\psi _{x}+y\psi _{y}\right) -2id\psi
\label{NonlinearParabolic2D} \\
& -\left( xf_{1}+yf_{2}\right) \psi +i\left( g_{1}\psi _{x}+g_{2}\psi
_{y}\right) +h\left\vert \psi \right\vert ^{2s}\psi ,  \notag
\end{align}%
where $a,$ $b,$ $c,$ $d,$ $f_{1,2}$ and $g_{1,2}$ are real-valued functions
of $t,$ can be transformed to%
\begin{equation}
i\chi _{\tau }-l_{0}(\chi _{\xi \xi }+\chi _{\eta \eta
})=-l_{0}h_{0}\left\vert \chi \right\vert ^{2s}\chi \qquad \left( l_{0}=\pm
1\right)  \label{2DShroedingerNLTransformed}
\end{equation}%
by the ansatz 
\begin{equation}
\psi =\mu ^{-1}e^{i(\alpha (x^{2}+y^{2})+(\delta _{1}x+\delta _{2}y)+\kappa
_{1}+\kappa _{2})}\chi (\xi ,\eta ,\tau ),  \label{2d transformation}
\end{equation}

where $\xi =\beta (t)x+\varepsilon _{1}(t),$ $\eta =\beta (t)y+\varepsilon
_{2}(t),$ $\tau =\gamma (t),$ $h(t)=h_{0}a(t)\beta ^{2}(t)\mu ^{2s}(t)$\ $%
(h_{0}$ is a constant), provided that%
\begin{align}
\frac{d\alpha }{dt}+b+2c\alpha +4a\alpha ^{2}& =0,  \label{SysA} \\
\frac{d\beta }{dt}+(c+4a\alpha )\beta & =0,  \label{SysB} \\
\frac{d\gamma }{dt}+a\beta ^{2}& =0,  \label{SysC} \\
\frac{d\delta _{1,2}}{dt}+(c+4a\alpha )\delta _{1,2}& =f_{1,2}+2g\alpha ,
\label{SysD} \\
\frac{d\varepsilon _{1,2}}{dt}& =(g-2a\delta _{1,2})\beta ,  \label{SysE} \\
\frac{d\kappa _{1,2}}{dt}& =g\delta _{1,2}-a\delta _{1,2}^{2}.  \label{SysF}
\end{align}%
Here,%
\begin{equation}
\alpha =\frac{1}{4a}\frac{\mu ^{\prime }}{\mu }-\frac{d}{2a},  \label{Alpha}
\end{equation}%
and solutions of the system (\ref{SysA})--(\ref{SysF}) are given by (\ref{mu}%
)--(\ref{kappa}).
\end{lemma}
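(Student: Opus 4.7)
The proof proceeds by direct substitution of the ansatz (\ref{2d transformation}) into the 2D equation (\ref{NonlinearParabolic2D}), followed by separation of terms according to powers of $x$ and $y$ and identification with the Riccati-type system (\ref{SysA})--(\ref{SysF}). The guiding observation is that the quadratic phase $\alpha(x^{2}+y^{2})$ is rotationally symmetric and the linear phase $\delta_{1}x+\delta_{2}y$ decouples in the two spatial variables; together with the fact that the coefficients $a,b,c,d$ in (\ref{NonlinearParabolic2D}) do not distinguish between the $x$- and $y$-directions, this suggests that the 2D reduction collapses to two mirror-image copies of the 1D computation already carried out for Lemma 2 of the appendix, coupled only through the common amplitude factor $\mu^{-1}$ and the common scaling $\beta(t)$.

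The concrete steps are as follows. First, I would compute the needed derivatives of $\psi=\mu^{-1}e^{iS}\chi(\xi,\eta,\tau)$ with $S=\alpha(x^{2}+y^{2})+\delta_{1}x+\delta_{2}y+\kappa_{1}+\kappa_{2}$, namely $\psi_{t}$, $\psi_{x}$, $\psi_{y}$, $\psi_{xx}$, $\psi_{yy}$, being careful to apply the chain rule on $\chi$ through $\xi=\beta x+\varepsilon_{1}$, $\eta=\beta y+\varepsilon_{2}$, $\tau=\gamma$. Second, I would substitute these into (\ref{NonlinearParabolic2D}), cancel the common factor $\mu^{-1}e^{iS}$, and organize the resulting expression by grouping coefficients of the monomials $x^{2}$, $y^{2}$, $x$, $y$, the constant term (after stripping the derivatives of $\chi$), and separately the terms involving $\chi_{\xi}$, $\chi_{\eta}$, $\chi_{\xi\xi}$, $\chi_{\eta\eta}$, $\chi_{\tau}$.

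The coefficients of $x^{2}$ and $y^{2}$ each yield the scalar Riccati equation (\ref{SysA}); the coefficients of $x$ and $y$ separately yield (\ref{SysD}) for $\delta_{1}$ and $\delta_{2}$; the $x$- and $y$-independent terms without $\chi$-derivatives yield (\ref{SysF}) for $\kappa_{1}$ and $\kappa_{2}$; the coefficients of $\chi_{\xi}$ and $\chi_{\eta}$ (linear in $x$ and $y$ respectively after collecting) force (\ref{SysB}) and (\ref{SysE}) through the relation $\alpha=\mu'/(4a\mu)-d/(2a)$ in (\ref{Alpha}); and finally the coefficient of $\chi_{\tau}$ matched against $\chi_{\xi\xi}+\chi_{\eta\eta}$ forces (\ref{SysC}) after dividing through by $a\beta^{2}$. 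The nonlinearity is absorbed by choosing $h(t)=h_{0}a(t)\beta^{2}(t)\mu^{2s}(t)$, since then $h|\psi|^{2s}\psi=h_{0}a\beta^{2}|\chi|^{2s}\chi\cdot\mu^{-1}e^{iS}$, which balances exactly against the $\chi_{\xi\xi}+\chi_{\eta\eta}$ contribution. With all conditions (\ref{SysA})--(\ref{SysF}) and (\ref{Alpha}) imposed, what remains is precisely $i\chi_{\tau}-l_{0}(\chi_{\xi\xi}+\chi_{\eta\eta})+l_{0}h_{0}|\chi|^{2s}\chi=0$, which is (\ref{2DShroedingerNLTransformed}).

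The main obstacle is purely organizational: the derivatives of $\psi$ produce a moderately large collection of terms, and one must carefully verify that the cross-derivative contributions from the two spatial directions separate into matching $x$- and $y$-structured blocks so that the same Riccati equation governs $\alpha$, $\beta$, $\gamma$ for both variables while independent equations govern $\delta_{1,2}$, $\varepsilon_{1,2}$, $\kappa_{1,2}$. Because the ansatz in (\ref{2d transformation}) contains no cross-term $xy$ in the quadratic phase and because (\ref{NonlinearParabolic2D}) contains no mixed derivative $\psi_{xy}$ nor cross potential $xy$, no additional consistency conditions arise, and the 1D verification from Lemma 2 can be invoked mutatis mutandis on each spatial direction. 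Existence of solutions to the system (\ref{SysA})--(\ref{SysF}) with the stated multiparameter formulas (\ref{mu})--(\ref{kappa}) is inherited from the 1D case, completing the proof.
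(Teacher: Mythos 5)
Your proposal is correct: the direct substitution of the ansatz, separation by powers of $x$ and $y$, and the observation that the absence of an $xy$ phase term and of mixed derivatives lets the two spatial directions decouple into mirror copies of the one-dimensional computation (sharing $\alpha$, $\beta$, $\gamma$ and the amplitude $\mu^{-1}$) is exactly the verification that underlies this lemma. The paper itself supplies no proof here --- it cites \cite{Ma:Suslov} and states that the appendix formulas were checked by computer algebra in \cite{Ko:Su;su} --- so your argument is the standard derivation made explicit, and it is sound; the only wrinkle worth flagging is in the paper's own statement rather than in your proof, namely that the parameter $l_{0}$ appears in the target equation (\ref{2DShroedingerNLTransformed}) but not in (\ref{SysC}), which should read $\gamma'+l_{0}a\beta^{2}=0$ as in (\ref{rica3}) for the signs to balance.
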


\end{document}